\documentclass[12pt,english]{article}
\usepackage{lmodern}
\usepackage[T1]{fontenc}
\usepackage[latin9]{inputenc}
\usepackage[letterpaper]{geometry}
\usepackage{color}
\usepackage{babel}
\usepackage{amsmath}
\usepackage{amsthm}
\usepackage{amssymb}
\usepackage{stmaryrd}
\usepackage{setspace}
\usepackage[authoryear]{natbib}
\usepackage[pdfusetitle,
 bookmarks=true,bookmarksnumbered=true,bookmarksopen=false,
 breaklinks=false,pdfborder={0 0 1},backref=false,colorlinks=true]
 {hyperref}
\hypersetup{
 colorlinks,pagebackref=false,citecolor=dark-blue,urlcolor=black,linkcolor=dark-blue}

\makeatletter
\theoremstyle{plain}
\newtheorem{prop}{\protect\propositionname}
\theoremstyle{plain}
\newtheorem{cor}{\protect\corollaryname}
\theoremstyle{plain}
\newtheorem{thm}{\protect\theoremname}
\theoremstyle{plain}
\newtheorem{assumption}{\protect\assumptionname}
\theoremstyle{plain}
\newtheorem{lem}{\protect\lemmaname}

\date{}

\usepackage{datetime}
\newdateformat{mydate}{\monthname[\THEMONTH] \THEYEAR}

\usepackage{amsfonts}
\usepackage{amsthm}
\usepackage{afterpage} 
\usepackage{color}
\usepackage{epsf}

\usepackage{fullpage}
\usepackage{latexsym}
\usepackage{bm}
\usepackage{bbm}

\usepackage[multiple]{footmisc}

\usepackage{floatrow}
\usepackage[margin=10pt,labelfont=bf,tableposition=bottom ]{caption}

\usepackage{tikz}
\usepackage{pgfplots}
\usetikzlibrary{decorations.pathreplacing,calligraphy}

\definecolor{dark-red}{rgb}{0.4,0.15,0.15}
\definecolor{dark-blue}{rgb}{0.15,0.15,0.75}
\definecolor{medium-blue}{rgb}{0,0,0.5}

\usetikzlibrary{arrows.meta,positioning}
\usepackage{amsmath,amssymb}

\makeatother

\providecommand{\assumptionname}{Assumption}
\providecommand{\corollaryname}{Corollary}
\providecommand{\lemmaname}{Lemma}
\providecommand{\propositionname}{Proposition}
\providecommand{\theoremname}{Theorem}

\begin{document}
\title{Profiling and Endogenous Valuation}
\author{Anh Nguyen\thanks{Tepper School of Business, Carnegie Mellon University. Email: anhnguyen@cmu.edu.}
\and  Teck Yong Tan\thanks{Simon Business School, University of Rochester. Email: t.tan@rochester.edu.}}
\date{\mydate\today\\}
\maketitle
\begin{abstract}
We study a monopolist facing a buyer whose valuation is determined
by pre-trade investment. Before setting price, the seller observes
a signal about the buyer's private investment cost (buyer profiling).
Information that helps the seller extract surplus can also undermine
the buyer's incentive to create it. We characterize the buyer-seller
payoffs attainable across all possible profiling. On the Pareto frontier,
if investment increases, hold-up risk always raises the payoff that
the seller captures faster than the surplus that the investment creates.
Protecting buyer welfare therefore requires discouraging investment,
even though investment is socially efficient.
\end{abstract}

\textbf{Keywords:} Endogenous valuation; Buyer profiling; Hold-up
risk. 

\textbf{JEL Classification:} D42, D82, L12

\newpage

\section{\protect\label{Section:=000020Introduction}Introduction}

Information is central to price discrimination. A seller who knows
more about a buyer can better predict his willingness to pay and tailor
prices. As firms increasingly collect, purchase, and analyze consumer
data, a natural question is how such buyer profiling affects market
outcomes and welfare.

\citet{Bergemann_Brooks_Morris_2015(AER)} (hereafter BBM) and a growing
literature study this question when buyer valuations are \emph{fixed}.
In many markets, however, valuations depend on costly actions that
the buyer takes before trade. A consumer may learn a firm's product
ecosystem or migrate his data and devices into it; a business customer
may train employees and adapt internal processes around a vendor's
platform. Such actions raise the value of trade, but the incentive
to take them depends on the price the buyer expects to face. Information
that helps the seller extract surplus can therefore undermine the
buyer's incentive to create it. Thus, with endogenous valuation, information
affects \emph{both} surplus creation and surplus division.

We study a monopolist facing a buyer with endogenous valuation and
analyze the welfare implications of the seller having information
about the buyer. The buyer (he) has initial valuation $L$ for the
seller\textquoteright s (she) product but can pay a privately known
cost $c$ to raise his valuation to $H>L$ before they meet.\footnote{Assuming binary investment simplifies the exposition; section \ref{Section:Multiple-Investments}
shows that our results extend to multiple investment options. Such
binary ``all-or-nothing'' investment also fits our motivating settings,
where what matters is whether the buyer crosses a qualitative threshold---becoming
fluent in a product ecosystem, or fully adapting operations around
a vendor---rather than by how much.} Investing is socially efficient for every type. The seller does not
observe whether the buyer invested, but before setting price she observes
a signal about $c$, interpreted as buyer profiling. Following BBM's
approach, we ask which buyer-seller payoff pairs can arise across
all possible profiling.

With fixed valuations, BBM shows that any individually rational division
of the available surplus---including the first-best surplus---is
attainable. Endogenous valuation generates two immediate departures.
First, the first-best surplus is unattainable: it requires every type
to invest, but if every type invests, the seller would optimally price
at $H$, and anticipating this, no type would invest. Second, more
information can strictly hurt the seller. If the seller perfectly
observes $c$, her payoff is her \emph{lowest} individually rational
payoff. In both departures, the driving force is that the seller's
information determines not only her ability to extract surplus, but
also the buyer's incentive to create it.

We begin the analysis by identifying two simple bounds on attainable
payoffs. Besides individual rationality, an attainable payoff pair
must satisfy two constraints. The first is a \emph{surplus constraint}:
there must exist some investment level that generates enough surplus
to support the payoffs. The second is novel to endogenous valuation:
the seller's payoff must be at least what she could earn by ignoring
her signal and always charging $H$. We call this the \emph{hold-up
risk constraint}. It imposes an upper bound on investment: more investment
creates more high-valuation buyers and makes pricing high more tempting
to the seller. If this temptation is too strong, the buyer\textquoteright s
investment incentive unravels, reminiscent of classic hold-up. Proposition
\ref{Proposition:AC} shows that these two necessary constraints are
also sufficient for payoff attainability. Thus, as in BBM, attainable
payoffs are governed by simple necessary bounds, but the new constraint
changes the geometry of the payoff set, which is characterized in
Theorem \ref{Theorem:Characterization} and Figure \ref{Figure:Attainable-Payoffs}. 

Our central result is that the hold-up risk constraint restricts not
only surplus creation but also surplus division. Proposition \ref{Proposition:Pareto-frontier}
shows that, on the Pareto frontier of the attainable set, increasing
the buyer\textquoteright s payoff requires lowering investment and,
with it, total welfare. This is because of a subtle property of the
hold-up risk constraint: as equilibrium investment expands, the payoff
that the seller must capture grows \emph{faster} than the surplus
that the investment creates. Sustaining more investment therefore
requires giving the seller more than the additional surplus created.
This differs from fixed-valuation price discrimination, where information
can freely divide the fixed surplus; here, surplus division and creation
are inseparable, and favoring the buyer on the frontier is possible
only by creating less. Protecting buyer welfare can therefore require
information that sustains \emph{less} investment, even though investment
is always socially efficient. 

In Proposition \ref{Proposition:Construction}, we show what such
information looks like. Every attainable payoff can be attained by
a dual-cutoff equilibrium: one cutoff determines the surplus creation
margin, and the other determines the surplus division margin. More
importantly, the construction identifies which information matters
for whom. Information about high-cost types, who are unlikely to invest,
benefits both parties because it prevents inefficient trade breakdown
and expands the surplus that can be sustained. Information about low-cost
types, who are likely to invest, is instead distributive: it helps
the seller identify the surplus created but weakens the buyer\textquoteright s
incentive to create it. The policy implication for any welfare objective
is therefore not whether profiling should be uniformly made more or
less accurate, but whose information should be.

\paragraph*{Related Literature. }

This paper contributes to the literature on the welfare implications
of the seller's information about buyers, beginning with BBM, and
extended to, among others, a multi-product seller (\citealp{Haghpanah_Siegel_2022(AERI),Haghpanah_Siegel_2023(JPE),BergemannHeumannWang2026}),
downward-sloping demand (\citealp{Farboodi_Haghpanah_Shourideh_2025}),
correlated valuation (\citealp{Kartik_Zhong_2025(WP)}), a privately
informed seller (\citealp{Ichihashi_Smolin_2025(GEB)}), and seller
competition (\citealp{Bergemann_Brooks_Morris_2025(AEJ),Elliott_et_at_2025}).
In this literature, valuations are fixed, so information affects only
how surplus is divided. Here, the surplus itself is endogenous to
the information structure, which entangles the division of surplus
with its creation. \citet{Piccione_Tan_1996(IER)}, \citet{Tan_1996(CJE)},
and \citet{Gershkov_et_al_2021(JPE)} study endogenous valuations
in auctions, analyzing how the selling mechanism affects investment
incentives; we instead fix the selling mechanism and consider how
the seller's information affects investment incentives. Our analysis
is also broadly related to Bayesian persuasion \citep{Kamenica_Gentzkow_2011(AER)}.
The difference here is that the privately informed player (the buyer)
also takes an action that affects the ``receiver's'' (the seller's)
payoff. \citet{Ichihashi_2025(JET)} studies a problem with a similar
structure, analyzing information and crime deterrence under an enforcement
resource constraint. 

\section{\protect\label{Section:Model}Model}

\paragraph{Primitives.}

A risk-neutral buyer (he) has unit demand for a good sold by a risk-neutral
seller (she) at zero marginal cost. The buyer's initial valuation
is $L>0$. Before facing the seller, the buyer can pay a cost $c$
to raise his valuation to $H>L$. The cost $c$ is the buyer's \emph{private
type} and is drawn from distribution $F$, with positive and continuous
density $f$ on $[\underline{c},\bar{c}]$, where $\underline{c}>0$.
Let $\gamma:=H-L$ denote the investment gain. We assume $\gamma>\bar{c}$,
so investment is socially efficient for every type.\footnote{Types with $c>\gamma$ will never invest and can be incorporated without
affecting the main insights.} 

\paragraph{Profiling.}

The seller does not observe whether the buyer invests.\footnote{If the seller observed the buyer's investment decision, she would
price at the buyer's realized valuation. Thus, no type would invest,
and the unique outcome would be no investment and seller profit $L$.} Before setting the price, however, she observes a signal about the
buyer's type. A signal structure is $\{\mathbb{S},\pi\}$, where $\mathbb{S}$
is a finite set of signals and $\pi_{s}\left(c\right)$ is the probability
that type $c$ generates signal $s$. The signal distribution depends
only on the buyer\textquoteright s type, not on his investment decision.
Thus, we study profiling based on the buyer's \emph{exogenous} characteristics
rather than monitoring technologies that observe the buyer's investment
action.

\paragraph{Timing.}

First, the signal structure is announced. Second, the buyer observes
his type $c$ and chooses whether to invest. The investment cost,
if incurred, is sunk. Third, the seller observes a signal drawn from
$\pi$ and offers a take-it-or-leave-it price $p$, which the buyer
accepts whenever his payoff is nonnegative. Outside options are zero.

\paragraph{Equilibrium.}

A signal structure induces a Bayesian game. In this game, a buyer
strategy maps each type $c$ into a probability of investing. Since
the buyer's valuation is either $L$ or $H$, it is without loss to
restrict the seller's prices to $\left\{ L,H\right\} $, so a seller
strategy maps each signal $s$ to a distribution over $\left\{ L,H\right\} $.
The solution concept is Bayes-Nash equilibrium. 

A buyer-seller payoff pair $\left(u_{B},u_{S}\right)$ is \emph{attainable}
if there exists a signal structure and an equilibrium of the induced
game that yield expected buyer payoff $u_{B}$ and expected seller
payoff $u_{S}$. We henceforth drop the ``expected'' quantifier.
The buyer can guarantee payoff $0$ by not investing, and the seller
can guarantee profit $L$ by always charging $L$. Thus, payoffs are
\emph{individually rational} (IR) only if $u_{B}\ge0$ and $u_{S}\ge L$. 

\section{\protect\label{Section:Main-Result}Main Results}

\subsection{Attainable Payoffs}

We first derive two necessary bounds on attainable payoffs. Fix an
equilibrium attaining $\left(u_{B},u_{S}\right)$, and let $q_{L}\left(c\right)$
denote the equilibrium probability that type $c$ faces $p=L$. If
type $c$ invests, his payoff is $q_{L}\left(c\right)\gamma-c$; if
he does not, his payoff is $0$. Buyer optimality therefore implies
\begin{equation}
q_{L}\left(c\right)\ge c/\gamma\ \ \text{ if }c\text{ invests,}\ \ \ \ \ \ \ q_{L}\left(c\right)\le c/\gamma\ \ \text{ if }c\text{ does not invest.}\label{eq:BIC}
\end{equation}
Consider the surplus generated in equilibrium. An investing type always
trades and generates surplus $H-c$. A noninvesting type trades only
at $p=L$. By (\ref{eq:BIC}), this occurs with probability at most
$c/\gamma$, so he generates at most surplus $\left(c/\gamma\right)L$.
Since $H-c$ decreases in $c$, whereas $\left(c/\gamma\right)L$
increases in $c$, total surplus is maximized, for any given investment
probability, when the investors are the lowest-cost types. Let $\hat{\tau}$
denote the type for which $F\left(\hat{\tau}\right)$ equals the equilibrium
probability of investment. Equilibrium surplus is thus at most $SW\left(\hat{\tau}\right)$,
where 
\[
SW\left(\tau\right):=\int_{\underline{c}}^{\tau}\left(H-c\right)f\left(c\right)dc+\int_{\tau}^{\bar{c}}\frac{c}{\gamma}Lf\left(c\right)dc.
\]
Hence, the first bound is that $\left(u_{B},u_{S}\right)$ must satisfy
\begin{equation}
u_{B}+u_{S}\le SW\left(\hat{\tau}\right).\tag{SC}\label{eq:SC}
\end{equation}
Next, if the seller ignores her signal and always charges $p=H$,
she sells exactly to the types who invest. Since the measure of investing
types is $F\left(\hat{\tau}\right)$, this deviation yields payoff
$F\left(\hat{\tau}\right)H$. Thus, the second bound is 
\begin{equation}
u_{S}\ge F\left(\hat{\tau}\right)H.\tag{HC}\label{eq:HC}
\end{equation}

Our first result is that payoff attainability is fully determined
by just these two bounds. 
\begin{prop}
\label{Proposition:AC}An IR pair $\left(u_{B},u_{S}\right)$ is attainable
if and only if there exists $\tau\in\left[\underline{c},\bar{c}\right]$
such that
\begin{equation}
SW\left(\tau\right)-u_{B}\ \underbrace{\ge}_{\text{SC}}\ u_{S}\ \underbrace{\ge}_{\text{HC}}\ F\left(\tau\right)H.\tag{AC}\label{eq:AC}
\end{equation}
\end{prop}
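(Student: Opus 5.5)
Necessity has essentially been shown in the text. Given an equilibrium attaining $(u_B,u_S)$, let $\hat\tau$ be defined by $F(\hat\tau)$ being the equilibrium investment probability. The inequalities (\ref{eq:SC}) and (\ref{eq:HC}) then give (\ref{eq:AC}) with $\tau=\hat\tau$. So the work lies in sufficiency. Fix an IR pair and a $\tau$ satisfying (\ref{eq:AC}). I will construct a two-signal structure $\mathbb{S}=\{s_L,s_H\}$ in which types $c<\tau$ invest, types $c>\tau$ do not, and the seller charges $L$ at $s_L$ and $H$ at $s_H$. The buyer's incentive constraints reduce to (\ref{eq:BIC}) with $q_L(c)=\pi_{s_L}(c)$. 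The cutoff type has measure zero and can be ignored.

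The key observation is that in such a structure the seller's payoff is
\[
u_S=L\,P(s_L)+H\,P(\text{inv},s_H).
\]
Obedience at $s_L$ requires $L\,P(s_L)\ge H\,P(\text{inv},s_L)$. Adding $H\,P(\text{inv},s_H)$ to both sides shows that this is exactly $u_S\ge F(\tau)H$, i.e.\ (\ref{eq:HC}). Likewise, total surplus equals
\[
\int_{\underline c}^{\tau}(H-c)f\,dc+\int_\tau^{\bar c}\pi_{s_L}(c)\,L f\,dc .
\]
This is at most $SW(\tau)$, with equality when $\pi_{s_L}(c)=c/\gamma$ for all noninvestors. So (\ref{eq:SC}) is the statement that we have enough surplus. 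The buyer's payoff is
\[
u_B=\int_{\underline c}^{\tau}\bigl(\pi_{s_L}(c)\gamma-c\bigr)f\,dc,
\]
which is nonnegative by (\ref{eq:BIC}).

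There are then two free instruments, and I will tune them to hit $(u_B,u_S)$ exactly. The first is the vector $q_I(c)=\pi_{s_L}(c)\in[c/\gamma,1]$ for investors $c<\tau$. Raising $q_I$ moves seller payoff $\gamma$ to the buyer one-for-one, lowering $u_S$ and raising $u_B$ without changing surplus. The second is $q_N(c)=\pi_{s_L}(c)\in[0,c/\gamma]$ for noninvestors $c>\tau$. Lowering $q_N$ destroys surplus $L$ per unit, and all of that loss falls on the seller. Start from $q_I(c)=c/\gamma$ and $q_N(c)=c/\gamma$; this gives $u_B=0$ and $u_S=SW(\tau)$, the top corner. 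Raise $q_I$ continuously until $u_B$ equals the target. Then lower $q_N$ continuously until $u_S$ equals the target. By continuity and the intermediate value theorem, this reaches any point with $u_S\le SW(\tau)-u_B$. While doing so, (\ref{eq:HC}) guarantees that obedience at $s_L$ holds throughout. If $q_I$ saturates at $1$ before $u_B$ is reached, the target violates (\ref{eq:SC}) combined with (\ref{eq:HC}): at $q_I\equiv1$ every investor faces $L$, and then $u_S\ge F(\tau)H$ forces $u_B\le SW(\tau)-u_S$. I would check this carefully, possibly by lowering $q_N$ first instead.

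I expect the main obstacle to be obedience at $s_H$, namely $H\,P(\text{inv},s_H)\ge L\,P(s_H)$. It can fail when few investors remain at $s_H$, for example when $q_I$ is near $1$ and $\tau$ is small. My first fix is to split $s_H$ into several signals, or to send some noninvestors' $s_H$-mass to a third signal $s_\emptyset$ where the seller charges $L$ but which contains no investors. Such a reallocation changes noninvestor trade probabilities, so instead I would replace $\tau$ by a smaller $\tau'$ when needed. The set of $\tau$ satisfying (\ref{eq:AC}) is an interval-like region, because $SW$ is concave in $\tau$ and $F(\tau)H$ is increasing. Choosing $\tau$ as the \emph{largest} value compatible with (\ref{eq:AC}) should make (\ref{eq:HC}) bind or nearly bind. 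That in turn forces enough investor mass at $s_H$ relative to noninvestors, and I would verify the $s_H$ inequality at that choice. If the direct verification fails, the fallback is to let the seller mix at $s_H$: mixing only adds slack to the $s_L$ constraint and preserves the payoff accounting.
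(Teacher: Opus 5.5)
Your necessity argument and two-signal architecture are fine, but sufficiency is left open at exactly the step you call the main obstacle, and the fixes you propose would not work. Obedience at $s_H$ is in fact automatic. Apply to $s_H$ the same manipulation you used at $s_L$: subtract $L=L\,P(s_L)+L\,P(s_H)$ from $u_S=L\,P(s_L)+H\,P(\text{inv},s_H)$. This gives
\[
H\,P(\text{inv},s_H)-L\,P(s_H)=u_S-L ,
\]
so obedience at $s_H$ is equivalent to the seller's IR constraint $u_S\ge L$, which any IR target satisfies once the construction delivers $u_S$ exactly. This identity is how the paper closes its proof. Your assertion that obedience at $s_H$ ``can fail when few investors remain at $s_H$'' is therefore false for any configuration delivering an IR seller payoff. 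The fallbacks are also unsound. $SW$ need not be concave, since $SW'(\tau)=H(1-\tau/\gamma)f(\tau)$ increases wherever $f$ rises fast enough. The seller cannot mix at $s_H$ if she strictly prefers $L$ there, and mixing would in any case change investors' probability of facing $L$, hence $u_B$ and their incentives.

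Your saturation check invokes the wrong constraint. (SC) and (HC) alone do not give $u_B\le\int_{\underline{c}}^{\tau}(\gamma-c)f\,dc$: at $\tau=\underline{c}$ they permit $u_B$ up to $\xi(\underline{c})=L\,\mathbb{E}[c]/\gamma>0$, while the saturation bound is $0$. What you need is seller IR. Along your first phase $u_S=SW(\tau)-u_B$. At $q_I\equiv1$, $q_N(c)=c/\gamma$ the seller earns $L\bigl[F(\tau)+\int_{\tau}^{\bar{c}}(c/\gamma)f\,dc\bigr]\le L\le SW(\tau)-u_B^{\text{target}}$, so by continuity the target $u_B$ is reached before saturation. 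This is the paper's inequality $u_B\le SW(\tau)-L<\int_{\underline{c}}^{\tau}(\gamma-c)f\,dc$.

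With these two repairs your route is valid and genuinely differs from the paper's. The paper first lowers $\tau$ until (SC) binds, keeps $\pi_L(c)=c/\gamma$ on every noninvestor, and splits surplus among investors by a cutoff $\phi$. You keep the given $\tau$ and destroy the excess surplus by pushing noninvestors' $\pi_L$ below $c/\gamma$, with the resulting extra trade breakdown borne entirely by the seller. The paper's version gives the implementation that is maximally accurate about noninvestors, which is what its frontier discussion relies on. Yours shows that the same payoff pair can also be sustained with more investment plus deliberate inefficiency on noninvestors.
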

Necessity of the \emph{attainability constraint} (\ref{eq:AC}) follows
from the preceding arguments. We establish sufficiency by construction
in Proposition \ref{Proposition:Construction} in Subsection \ref{subsection:Proof-Theorem-1}.

The two inequalities in (\ref{eq:AC}) capture the two forces of endogenous
valuation. (\ref{eq:SC}) is a \emph{surplus constraint}: there must
exist some investment level generating enough surplus to sustain the
payoffs. Since 
\[
\frac{d}{d\tau}SW\left(\tau\right)=H\left(1-\frac{\tau}{\gamma}\right)f\left(\tau\right)>0,
\]
(\ref{eq:SC}) imposes a \emph{lower} bound on investment. 

(\ref{eq:HC}) is a \emph{hold-up risk constraint}. If the mass of
investing types is too high, the seller would ignore her signal and
charge $H$; anticipating that his investment gain will be expropriated,
the buyer does not invest in the first place, reminiscent of the classic
hold-up problem. (\ref{eq:HC}) thus imposes an \emph{upper} bound
on investment and restricts surplus creation.

The first-best surplus requires investment from every type, but since
$SW\left(\bar{c}\right)<F\left(\bar{c}\right)H$, this is never attainable. 
\begin{cor}
\label{Corollary:No-FB}Any IR pair $\left(u_{B},u_{S}\right)$ that
sums to the first-best expected surplus $H-E\left[c\right]$ is unattainable. 
\end{cor}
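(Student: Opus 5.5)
We argue by contradiction and use only the necessity direction of Proposition \ref{Proposition:AC}, which the paper has already justified through (\ref{eq:BIC}), (\ref{eq:SC}) and (\ref{eq:HC}). Suppose an IR pair $(u_B,u_S)$ with $u_B+u_S=H-E[c]$ is attainable. Then some $\tau\in[\underline{c},\bar{c}]$ satisfies (\ref{eq:AC}). The plan has two steps. First, the surplus side of (\ref{eq:AC}) forces $\tau=\bar{c}$. Second, at $\tau=\bar{c}$ the two sides of (\ref{eq:AC}) are incompatible, because $SW(\bar{c})<F(\bar{c})H$.

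For the first step, compute $SW(\bar{c})=\int_{\underline{c}}^{\bar{c}}(H-c)f(c)\,dc=H-E[c]$, since the second integral vanishes. We know $\frac{d}{d\tau}SW(\tau)=H(1-\tau/\gamma)f(\tau)$. This derivative is strictly positive on $[\underline{c},\bar{c}]$ because $\tau\le\bar{c}<\gamma$ and $f>0$. Hence $SW(\tau)<H-E[c]$ for every $\tau<\bar{c}$. The left inequality in (\ref{eq:AC}) requires $SW(\tau)\ge u_B+u_S=H-E[c]$, so it can hold only at $\tau=\bar{c}$.

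For the second step, set $\tau=\bar{c}$. The chain (\ref{eq:AC}) then yields $H-E[c]-u_B\ge u_S\ge F(\bar{c})H=H$. Since $u_B\ge0$ by IR, this gives $H-E[c]\ge H$, that is, $E[c]\le0$. This contradicts $\underline{c}>0$. Equivalently, $SW(\bar{c})=H-E[c]<H=F(\bar{c})H$, so no pair with $u_B\ge0$ fits between the two bounds.

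The argument is routine. The only point needing care is the strict monotonicity of $SW$, which depends on $\gamma>\bar{c}$ and $f>0$. Without it, some $\tau<\bar{c}$ might also reach the first-best level, and one would have to rule out such $\tau$ using (\ref{eq:HC}) separately.
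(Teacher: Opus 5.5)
Your proof is correct and is essentially the paper's argument: the paper observes that first-best surplus requires every type to invest, and that $SW(\bar{c})<F(\bar{c})H$, so (AC) cannot hold. Your Step 1 makes the first observation rigorous through the strict monotonicity of $SW$ and the identity $SW(\bar{c})=H-E[c]$, and your Step 2 is the second observation combined with $u_B\ge 0$.
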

\bigskip

Next, we use Proposition \ref{Proposition:AC} to characterize the
attainable set, beginning with its Pareto frontier. Suppose $\left(u_{B},u_{S}\right)$
is on the frontier and $\tau$ satisfies (\ref{eq:AC}) with this
pair. If (\ref{eq:SC}) were slack, raising $u_{B}$---while holding
$u_{S}$ and $\tau$ fixed---preserves (\ref{eq:AC}), contradicting
Pareto optimality. Hence, (\ref{eq:SC}) binds. If (\ref{eq:HC})
were slack, then (\ref{eq:AC}) is
\begin{equation}
SW\left(\tau\right)-u_{B}=u_{S}>F\left(\tau\right)H.\label{eq:HC-binds}
\end{equation}
Since $SW$ is strictly increasing and the second inequality in (\ref{eq:HC-binds})
is strict, replacing $\tau$ with a slightly higher $\tau'$ preserves
(\ref{eq:AC}) and makes (\ref{eq:SC}) slack, contradicting Pareto
optimality by the previous argument. Hence, (\ref{eq:HC}) must also
bind.

Since both (\ref{eq:SC}) and (\ref{eq:HC}) are binding, a pair $\left(u_{B},u_{S}\right)$
on the attainable Pareto frontier must take the form 
\begin{equation}
u_{S}=F\left(\tau\right)H\ ,\ \ \ \ u_{B}=\xi\left(\tau\right):=SW\left(\tau\right)-F\left(\tau\right)H\label{eq:Frontier-(uS,uB)}
\end{equation}
for some $\tau$.\footnote{The converse also holds: if (\ref{eq:SC}) and (\ref{eq:HC}) bind
at some $\tau$, the IR pair is on the frontier---see Lemma \ref{Lemma:Frontier-Sufficiency.}
in Appendix \ref{Section:MainAppendix}. } Furthermore, differentiating $\xi$ yields 
\begin{align}
\frac{d}{d\tau}\xi\left(\tau\right)= & -f\left(\tau\right)\tau\left(1+\frac{L}{\gamma}\right)<0,\label{eq:xi-prime}
\end{align}
so $\xi$ is strictly decreasing. Each frontier pair is therefore
determined by a \emph{unique} $\tau$, with $u_{S}$ increasing and
$u_{B}$ decreasing in $\tau$. The Pareto frontier of the attainable
set therefore runs between two points: the buyer-optimal point holds
the seller to her IR payoff $L$, and the seller-optimal point holds
the buyer to his IR payoff $0$. Denote the associated $\tau$ by
$\tau_{B}^{*}$ and $\tau_{S}^{*}$, defined by 
\begin{equation}
F\left(\tau_{B}^{*}\right)H=L\ \ \ \text{ and }\ \ \ \xi\left(\tau_{S}^{*}\right)=0,\label{eq:tau*}
\end{equation}
which satisfy $\underline{c}<\tau_{B}^{*}<\tau_{S}^{*}<\bar{c}$.\footnote{See Lemma \ref{Lemma:tau*} in Appendix \ref{Section:MainAppendix}.}
This completes the attainable Pareto frontier characterization.

Next, every IR payoff pair below the frontier is also attainable.
To see why, suppose $\left(u_{B},u_{S}\right)$ is on the frontier.
From the previous discussion, there exists $\tau\in\left[\tau_{B}^{*},\tau_{S}^{*}\right]$
such that 
\begin{equation}
SW\left(\tau\right)-u_{B}\ =\ u_{S}\ =\ F\left(\tau\right)H.\label{eq:AC-binding}
\end{equation}
If $u_{B}$ in (\ref{eq:AC-binding}) is replaced by any $u_{B}'<u_{B}$,
(\ref{eq:AC}) still holds at the same $\tau$, so $\left(u_{B}',u_{S}\right)$
is attainable by Proposition \ref{Proposition:AC}. If $u_{S}$ in
(\ref{eq:AC-binding}) is replaced by some $u_{S}'<u_{S}$, there
exists $\tau'<\tau$ that satisfies (\ref{eq:AC}) with $u_{B}$---this
is because $\xi'<0$, so $F\left(\tau\right)H$ increases faster in
$\tau$ than $SW\left(\tau\right)$---so $\left(u_{B},u_{S}'\right)$
is attainable by Proposition \ref{Proposition:AC}.

Let $\bar{U}_{S}\left(u_{B}\right):=F\left(\xi^{-1}\left(u_{B}\right)\right)H$
denote the seller's payoff on the frontier at buyer payoff $u_{B}$. 
\begin{thm}
\label{Theorem:Characterization}The pair $\left(u_{B},u_{S}\right)$
is attainable if and only if
\[
0\le u_{B}\le\xi\left(\tau_{B}^{*}\right)\ \ \ \text{ and }\ \ \ L\le u_{S}\le\bar{U}_{S}\left(u_{B}\right).
\]
\end{thm}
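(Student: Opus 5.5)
The plan is to reduce the theorem entirely to Proposition \ref{Proposition:AC}, treating it as a two-dimensional feasibility problem in $\left(u_{B},u_{S}\right)$ parametrized by $\tau$. Throughout I use three facts established above: $SW$ is strictly increasing; $\xi=SW-FH$ is strictly decreasing by (\ref{eq:xi-prime}); and $\tau_{B}^{*},\tau_{S}^{*}$ are defined by (\ref{eq:tau*}). Since $\xi$ is continuous and strictly decreasing, it maps $\left[\tau_{B}^{*},\tau_{S}^{*}\right]$ bijectively onto $\left[0,\xi\left(\tau_{B}^{*}\right)\right]$. Hence $\bar{U}_{S}\left(u_{B}\right)=F\left(\xi^{-1}\left(u_{B}\right)\right)H$ is well defined on this interval and decreasing in $u_{B}$. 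It also satisfies $\bar{U}_{S}\ge F\left(\tau_{B}^{*}\right)H=L$, so the set described in the theorem is nonempty for each admissible $u_{B}$.

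\emph{Necessity.} Take an attainable pair, so that it is IR, giving $u_{B}\ge0$ and $u_{S}\ge L$. By Proposition \ref{Proposition:AC} there is a $\tau$ with $SW\left(\tau\right)-u_{B}\ge u_{S}\ge F\left(\tau\right)H$. Subtracting gives $u_{B}\le\xi\left(\tau\right)$.
\begin{itemize}
\item Combining $F\left(\tau\right)H\le u_{S}$ with $\xi$ decreasing yields the upper bound on $u_{B}$. If $u_{S}\le L$ (that is, $u_{S}=L$), then $F\left(\tau\right)H\le L$, so $\tau\le\tau_{B}^{*}$.
\item Comparing with the frontier parameter is the key step. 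Let $\tau^{f}:=\xi^{-1}\left(u_{B}\right)$. Since $\xi\left(\tau\right)\ge u_{B}=\xi\left(\tau^{f}\right)$ and $\xi$ is decreasing, $\tau\le\tau^{f}$.
\item Therefore $u_{S}\le SW\left(\tau\right)-u_{B}\le SW\left(\tau^{f}\right)-u_{B}=F\left(\tau^{f}\right)H=\bar{U}_{S}\left(u_{B}\right)$, using $SW$ increasing.
\item For the bound $u_{B}\le\xi\left(\tau_{B}^{*}\right)$: from $L\le u_{S}\le SW\left(\tau\right)-u_{B}$ we get $u_{B}\le SW\left(\tau\right)-L$. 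Writing $m\left(\tau\right):=\min\{\xi\left(\tau\right),SW\left(\tau\right)-L\}$, we have $u_{B}\le m\left(\tau\right)$. The first term is decreasing, the second increasing, and they cross exactly at $\tau_{B}^{*}$, where $F\left(\tau_{B}^{*}\right)H=L$. So $\sup_{\tau}m\left(\tau\right)=\xi\left(\tau_{B}^{*}\right)$.
\end{itemize}

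\emph{Sufficiency.} Given $u_{B}\in\left[0,\xi\left(\tau_{B}^{*}\right)\right]$ and $u_{S}\in\left[L,\bar{U}_{S}\left(u_{B}\right)\right]$, the task is to exhibit a $\tau$ satisfying (\ref{eq:AC}). Consider $g\left(\tau\right):=F\left(\tau\right)H$ on $\left[\underline{c},\tau^{f}\right]$.
\begin{itemize}
\item $g$ is continuous, with $g\left(\underline{c}\right)=0<L\le u_{S}$ and $g\left(\tau^{f}\right)=\bar{U}_{S}\left(u_{B}\right)\ge u_{S}$.
\item By the intermediate value theorem, choose $\tau'\le\tau^{f}$ with $F\left(\tau'\right)H=u_{S}$, so (\ref{eq:HC}) holds with equality.
\item For (\ref{eq:SC}), note $SW\left(\tau'\right)-u_{B}-u_{S}=\xi\left(\tau'\right)-u_{B}\ge\xi\left(\tau^{f}\right)-u_{B}=0$, because $\tau'\le\tau^{f}$ and $\xi$ is decreasing.
\end{itemize}
Proposition \ref{Proposition:AC} then gives attainability. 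This formalizes the informal "below the frontier" argument in the text.

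The main obstacle is not analytic but bookkeeping. Necessity must simultaneously respect IR and the existence of some $\tau$. The cleanest route is the comparison $\tau\le\xi^{-1}\left(u_{B}\right)$, which rests entirely on the monotonicity of $\xi$. The delicate point is making sure $\xi^{-1}\left(u_{B}\right)$ lies within $\left[\underline{c},\bar{c}\right]$. This holds because $0\le u_{B}\le\xi\left(\tau_{B}^{*}\right)$, together with the facts from Lemma \ref{Lemma:tau*} that $\xi\left(\tau_{S}^{*}\right)=0$ and $\underline{c}<\tau_{B}^{*}<\tau_{S}^{*}<\bar{c}$.
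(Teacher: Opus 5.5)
Your proof is correct, but it is organized differently from the paper's.

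The paper first characterizes the Pareto frontier by a perturbation argument. On the frontier both (SC) and (HC) must bind, so frontier points are $(\xi(\tau),F(\tau)H)$ with $\tau\in[\tau_B^*,\tau_S^*]$; Lemma \ref{Lemma:Frontier-Sufficiency.} gives the converse. The paper then shows that IR points below the frontier are attainable, either by lowering $u_B$ at fixed $\tau$ or by lowering $u_S$ through a smaller $\tau$. That every attainable pair lies weakly below this frontier is left implicit.

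You instead derive both necessity bounds directly from (AC):
\begin{itemize}
\item The comparison $\tau\le\xi^{-1}(u_B)$ yields $u_S\le\bar{U}_S(u_B)$. This is the same monotonicity step the paper uses inside the proof of Lemma \ref{Lemma:Frontier-Sufficiency.}.
\item The $\min\{\xi(\tau),SW(\tau)-L\}$ argument yields $u_B\le\xi(\tau_B^*)$.
\item Your sufficiency step is the paper's ``lower $u_S$ by lowering $\tau$'' argument, made precise with the intermediate value theorem.
\end{itemize}

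Your route is more self-contained. It never needs the existence of a dominating frontier point for each attainable pair, which rests on a compactness fact the paper does not spell out. The paper's route instead delivers the structural fact that both constraints bind on the frontier. Proposition \ref{Proposition:Pareto-frontier} and the online-appendix characterization build on that fact.

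One correction: your first necessity bullet is wrong as stated. (HC) bounds $\tau$ from above and hence $\xi(\tau)$ from below, so it gives no upper bound on $u_B$. The remark about $u_S=L$ also leads nowhere. Delete that bullet; your fourth bullet, which plugs seller IR into (SC), is the correct derivation of $u_B\le\xi(\tau_B^*)$.
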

\begin{figure}[h]
\caption{\protect\label{Figure:Attainable-Payoffs}Attainable Payoffs}

\centering{}

\begin{tikzpicture}[x=1.8cm,y=1.8cm,>=Stealth]

\coordinate (A) at (0,0.3);      
\coordinate (B) at (1,0.3);   
\coordinate (C) at (1.8,0.3);    
\coordinate (D) at (0,3.5);      


\fill[blue!12]
  (D) .. controls (0.533,2.860) and (1.20,1.860) .. (C)
  -- (A) -- cycle;

\draw[dashed, thick] (0,4.2) -- (4.2,0)
  node[pos=0.45, sloped, above=1pt, font=\small]
  {first-best frontier: $u_B+u_S=H-\mathbb{E}[c]$};

\draw[very thick]
  (D) .. controls (0.533,2.860) and (1.200,1.860) .. (C);

\draw[thick] (A) -- (C);
\draw[thick] (A) -- (D);

\draw[->] (0,0) -- (5,0) node[below] {$u_B$};
\draw[->] (0,0) -- (0,5) node[left] {$u_S$};

\node[below left, font=\small] at (0,0.5) {$L$};
\node[left, font=\small] at (0,3.5) {$F(\tau_S^*)H$};
\node[left, font=\small] at (0,4.2) {$H-\mathbb{E}[c]$};
\node[below, font=\small] at (4.2,0) {$H-\mathbb{E}[c]$};

\draw[densely dotted] (B) -- (1,0) node[below, font=\small] {$u_B^0$};
\draw[densely dotted] (C) -- (1.8,0)  node[below, font=\small] {$\xi(\tau_B^*)$};

\node[font=\small, align=center] at (0.55,1.5) {attainable\\ payoffs};

\fill (A) circle (1.6pt) node[above right=1pt and 0pt, font=\small] {$A$};
\fill (B) circle (1.6pt) node[above=2pt, font=\small] {$B$};
\fill (C) circle (1.6pt) node[above right=1pt and 1pt, font=\small] {$C$};
\fill (D) circle (1.6pt) node[above right=1pt and 1pt, font=\small] {$D$};

\node[anchor=north east, align=left, font=\small,
      draw, thin, inner sep=5pt] at (6.5,4.5)
  {$A$: Perfect-profiling payoffs $ \   (0,\,L)$\\
   $B$: Ban-on-profiling payoffs $\ (u_B^0,\,L)$\\
   $C$: Buyer-optimal payoffs $ \   (\xi(\tau_B^*),\,L)$\\
   $D$: Seller-optimal payoffs $\   (0,\,F(\tau_S^*)H)$};

\end{tikzpicture}
\end{figure}

Figure \ref{Figure:Attainable-Payoffs} illustrates the attainable
set. Its Pareto frontier is strictly below the first-best frontier,
as established in Corollary \ref{Corollary:No-FB}. The shape of the
Pareto frontier further shows that (\ref{eq:HC}) restricts not only
surplus creation but also \emph{surplus division}, as we make precise
next. 
\begin{prop}
\label{Proposition:Pareto-frontier}The Pareto frontier of the attainable
payoffs is strictly concave, with $\bar{U}_{S}'\left(u_{B}\right)<-1$.
Consequently, as buyer payoff increases along the frontier, total
welfare strictly falls, and at an increasing rate.
\end{prop}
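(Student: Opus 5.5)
The plan is to parametrize the frontier by $\tau\in[\tau_B^*,\tau_S^*]$, as in (\ref{eq:Frontier-(uS,uB)}): $u_S=F(\tau)H$ and $u_B=\xi(\tau)$. By (\ref{eq:xi-prime}), $\xi$ is strictly decreasing and $C^1$ with $\xi'(\tau)\neq 0$ on this interval, since $f>0$ and $\tau\ge\tau_B^*>\underline{c}>0$. The inverse function theorem then makes $\xi^{-1}$ differentiable on $[0,\xi(\tau_B^*)]$, so $\bar U_S$ is differentiable. The chain rule gives
\[
\bar U_S'(u_B)=\frac{f(\tau)H}{\xi'(\tau)}=-\frac{H}{\tau\left(1+L/\gamma\right)}=-\frac{\gamma}{\tau},\qquad \tau=\xi^{-1}(u_B),
\]
where the last equality uses $1+L/\gamma=H/\gamma$. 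The density cancels, which is what makes the argument clean.

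For the slope bound, note that $\tau\le\tau_S^*<\bar c<\gamma$ by Lemma \ref{Lemma:tau*} and the assumption $\gamma>\bar c$. Hence $\gamma/\tau>1$, so $\bar U_S'(u_B)<-1$. Total welfare on the frontier is $W(u_B)=u_B+\bar U_S(u_B)$, so $W'(u_B)=1-\gamma/\tau<0$, and total welfare strictly falls as $u_B$ rises.

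For strict concavity and the increasing rate of decline, observe that $u_B\mapsto\tau=\xi^{-1}(u_B)$ is strictly decreasing. Therefore $\bar U_S'(u_B)=-\gamma/\xi^{-1}(u_B)$ is strictly decreasing in $u_B$: as $u_B$ rises, $\tau$ falls, $\gamma/\tau$ rises, and $-\gamma/\tau$ falls. A differentiable function with strictly decreasing derivative is strictly concave, which is a standard mean-value argument that needs no second derivative. The same monotonicity shows that $W'(u_B)=1-\gamma/\xi^{-1}(u_B)$ is strictly decreasing and negative, so $|W'|$ strictly increases and welfare falls at an increasing rate.

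The main obstacle is the regularity step, not the computation. We need $\xi^{-1}$ to be well defined and differentiable on the whole frontier, including its endpoints. This follows from continuity of $f$ and the fact that $\xi'$ is bounded away from zero on the compact interval $[\tau_B^*,\tau_S^*]$. We also need Lemma \ref{Lemma:tau*} to place this interval inside $(\underline c,\bar c)$, which guarantees $0<\tau<\gamma$. At the endpoints we use one-sided derivatives, and the concavity argument goes through unchanged.
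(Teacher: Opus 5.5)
Your proposal is correct and follows the paper's route. You parametrize the frontier by $\tau$, compute $\bar{U}_S'(u_B)=f(\tau)H/\xi'(\tau)=-\gamma/\tau$, get the slope bound from $\tau\le\tau_S^*<\bar c<\gamma$, and obtain strict concavity and the welfare claims because $\tau=\xi^{-1}(u_B)$ falls in $u_B$. Your regularity remarks (inverse function theorem, one-sided derivatives at the endpoints) and your explicit check that $W'(u_B)=1-\gamma/\tau$ is strictly decreasing supply details the paper leaves implicit.
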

The Pareto frontier of the BBM ``surplus triangle'' has slope $-1$,
so any division of the maximum surplus is attainable. In contrast,
surplus division on the attainable Pareto frontier here is restricted---increasing
buyer payoff requires decreasing total welfare. The reason is that
(\ref{eq:HC}) causes the seller's share of the surplus on the frontier
to rise disproportionately with investment. To see this, from (\ref{eq:Frontier-(uS,uB)}),
the seller's payoff is $F\left(\tau\right)H$, the payoff she could
get from ``holding up the buyer'' by always pricing $H$. Marginally
increasing the investment cutoff implies 
\begin{equation}
\frac{d}{d\tau}F\left(\tau\right)H=Hf\left(\tau\right)\ >\ H\left(1-\frac{\tau}{\gamma}\right)f\left(\tau\right)=\frac{d}{d\tau}SW\left(\tau\right).\label{eq:Surplus-Tradeoff}
\end{equation}
(\ref{eq:Surplus-Tradeoff}) shows that when investment increases,
the surplus that the seller captures, $F\left(\tau\right)H$, always
grows \emph{faster} than the surplus that the investment creates,
$SW\left(\tau\right)$. Increasing the buyer's payoff therefore requires
\emph{decreasing} investment and, with it, the surplus generated.
A stark implication that follows is that the highest attainable total
welfare is obtained \emph{only} at the seller-optimal payoff pair
$\left(0,F\left(\tau_{S}^{*}\right)H\right)$, which holds the buyer
to his outside option value.

\bigskip

Next, we discuss the value of two common profiling policies. 

\paragraph{Perfect profiling.}

Suppose the seller perfectly observes $c$. The unique equilibrium
payoff is $\left(u_{B},u_{S}\right)=(0,L)$. If the seller expects
type $c$ to invest, she charges $H$, which deters investment; if
she expects him not to invest, she charges $L$, which induces investment.
Thus, both players randomize in equilibrium, leading to payoff pair
$\left(0,L\right)$.\footnote{In equilibrium, every type invests with probability $L/H$, and the
seller sets $p=L$ with probability $c/\gamma$ for type $c$. Since
every type plays ``not invest'' with positive probability, his equilibrium
payoff must equal that from not investing, which is $0$. Since the
seller charges $L$ with positive probability against every type,
her equilibrium payoff equals that from charging $L$, which is $L$.} 

\paragraph{Ban on profiling.}

Suppose that the seller observes no signal about $c$. By the same
argument as before, she must randomize in equilibrium, so her payoff
is $L$. Let $q_{L}^{0}$ denote the equilibrium probability of $p=L$.
By (\ref{eq:BIC}), type $c$ invests if and only if $c\le q_{L}^{0}\gamma=:\tau^{0}$,
and the seller's indifference between the two prices requires $\tau^{0}$
to satisfy $F\left(\tau^{0}\right)H=L$, so $\tau^{0}=F^{-1}\left(L/H\right)=\tau_{B}^{*}$
from (\ref{eq:tau*}). The buyer's payoff is therefore
\begin{equation}
u_{B}^{0}=\int_{\underline{c}}^{\tau_{B}^{*}}\left(\tau_{B}^{*}-c\right)f\left(c\right)dc,\ \ \ \label{eq:u^0_B}
\end{equation}
with $0<u_{B}^{0}<\xi\left(\tau_{B}^{*}\right)$.\footnote{See Lemma \ref{Lemma:u^0_B} in Appendix \ref{Section:MainAppendix}.}

\bigskip

The payoffs from these two policies relative to the attainable payoff
set are illustrated in Figure \ref{Figure:Attainable-Payoffs}. Their
positions have implications for the regulation of consumer data. First,
more information need \emph{not} benefit the seller and may even decrease
profit. This contrasts with fixed-valuation environments, where information
can never hurt the seller. Firms may therefore voluntarily limit their
data collection when valuations are endogenous, but such restraint
does not necessarily improve buyer welfare---the seller-optimal outcome
likewise gives the buyer zero payoff.

Second, banning profiling offers some buyer protection, since $u_{B}^{0}>0$,
but it is not buyer-optimal. Since $u_{B}^{0}<\xi\left(\tau_{B}^{*}\right)$,
the buyer can do better with the seller having some information. In
fact, since the seller's payoff is $L$, some information can make
\emph{both} parties better off. While a similar conclusion also holds
in fixed-valuation environments (such as BBM), the mechanism here
is different. There, information benefits buyers by allowing the seller
to open up low-price markets to buyers who would otherwise be excluded.
Here, information benefits the buyer by mitigating hold-up risk and
allowing him to retain part of the surplus created by investment.
The relevant policy question is therefore not whether the seller should
have more or less information, but which buyer types the information
concerns. The next subsection makes this point precise.

\bigskip

\subsection{\protect\label{subsection:Proof-Theorem-1}Attaining the Payoffs}

We now construct, for each payoff pair satisfying (\ref{eq:AC}),
a signal structure and equilibrium that attain it. It suffices to
use signals that serve as price recommendations, so let $\mathbb{S}=\left\{ L,H\right\} $.
For $\phi\in\left[\underline{c},\bar{c}\right]$, define the signal
structure $\pi^{\phi}$ by 
\begin{equation}
\pi_{L}^{\phi}\left(c\right)=\begin{cases}
1 & \text{ if }c\le\phi\\
c/\gamma & \text{ if }c>\phi
\end{cases}\ \ \ \ ;\ \ \ \pi_{H}^{\phi}\left(c\right)=1-\pi_{L}^{\phi}\left(c\right)\ \text{ for all }c\in\left[\underline{c},\bar{c}\right].\label{eq:pi-phi}
\end{equation}

\begin{prop}
\label{Proposition:Construction}Suppose $\left(u_{B},u_{S}\right)$
is IR and satisfies (\ref{eq:AC}). Then there exist $\phi$ and $\tau$,
with $\underline{c}\le\phi<\tau<\bar{c}$, such that
\begin{equation}
u_{B}=\int_{\underline{c}}^{\phi}\left(\gamma-c\right)f\left(c\right)dc\ \ \ \text{ and }\ \ \ SW\left(\tau\right)=u_{B}+u_{S}.\label{eq:AttainRule}
\end{equation}
Under signal structure $\pi^{\phi}$, the following profile is an
equilibrium attaining $\left(u_{B},u_{S}\right)$: types $c\le\tau$
invest, types $c>\tau$ do not, and the seller sets $p=s$.
\end{prop}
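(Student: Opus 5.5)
The proof has two parts. First we pick $\tau$ and $\phi$ by intermediate-value arguments. Then we verify buyer optimality, compute payoffs, and check seller obedience. The key observation is that with only two signals, each obedience constraint is equivalent to a global deviation ("always charge $H$" or "always charge $L$"). These deviations are exactly (\ref{eq:HC}) and IR.

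\emph{Choosing $\tau$.} By (\ref{eq:AC}) there is $\tau'$ with $SW(\tau')\ge u_B+u_S\ge F(\tau')H$. Since $c<\gamma$ on the support, $SW(\underline{c})=\int (c/\gamma)Lf<L\le u_B+u_S$. Since $SW$ is continuous and strictly increasing, there is a unique $\tau\in(\underline{c},\tau']$ with $SW(\tau)=u_B+u_S$. Then $F(\tau)H\le F(\tau')H\le u_S$, so (\ref{eq:HC}) holds at $\tau$. To see $\tau<\bar c$: otherwise $u_S\ge H$, whereas $u_B+u_S=SW(\bar c)<F(\bar c)H=H$ by Corollary \ref{Corollary:No-FB}, a contradiction.

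\emph{Payoffs under $\pi^{\phi}$ for any $\phi\le\tau$.} Consider the profile in the statement. Types $c\le\phi$ face $L$ for sure, so investing is strictly optimal. Types in $(\phi,\tau]$ face $L$ with probability $c/\gamma$, so they are indifferent and investing is optimal. Types $c>\tau$ also face $q_L=c/\gamma$ and are indifferent, so not investing is optimal. Thus (\ref{eq:BIC}) holds.

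Under this profile, investors always trade and noninvestors trade at $L$ with probability exactly $c/\gamma$. Hence realized surplus equals $SW(\tau)$. Only types $c\le\phi$ earn positive rent, so the buyer's payoff is $G(\phi):=\int_{\underline c}^{\phi}(\gamma-c)f$. The seller's payoff is therefore $SW(\tau)-G(\phi)$.

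\emph{Choosing $\phi$.} The function $G$ is continuous and increasing with $G(\underline c)=0\le u_B$. At $\phi=\tau$ no one receives signal $H$ and the seller always charges $L$. Her payoff is then $L\bigl(F(\tau)+\int_\tau^{\bar c}(c/\gamma)f\bigr)<L$. Hence $SW(\tau)-G(\tau)<L\le u_S=SW(\tau)-u_B$, i.e. $G(\tau)>u_B$. By the intermediate value theorem there is $\phi\in[\underline c,\tau)$ with $G(\phi)=u_B$. This delivers (\ref{eq:AttainRule}), and the seller's payoff is then $u_S$.

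\emph{Seller obedience (the main step).} Deviating after $s=L$ to $H$ leaves her signal-$H$ revenue unchanged. Since signal $H$ already sells only to investors at price $H$, this deviation yields exactly the payoff from always charging $H$, namely $F(\tau)H$. So obedience at $s=L$ is equivalent to $u_S\ge F(\tau)H$, which is (\ref{eq:HC}) at $\tau$, shown above.

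Symmetrically, deviating after $s=H$ to $L$ yields the payoff from always charging $L$, namely $L$ (everyone buys at $L$). So obedience at $s=H$ is equivalent to $u_S\ge L$, which is IR.

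The only delicate points are this reduction of signal-wise obedience to global deviations, and the strict inequality $G(\tau)>u_B$ that ensures $\phi<\tau$. Everything else is routine bookkeeping.
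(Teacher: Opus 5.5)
Your proof is correct and follows the paper's argument essentially step for step. You obtain $\tau$ from the intermediate value theorem applied to $SW$. You get $\phi<\tau$ because the probability of trade is below one: your computation of the seller's payoff at $\phi=\tau$ is exactly the paper's bound $SW(\tau)<\int_{\underline{c}}^{\tau}(\gamma-c)f(c)\,dc+L$. You then use the same two obedience identities, which you read as the ``always charge $H$'' and ``always charge $L$'' deviations. You also prove $\tau<\bar{c}$ explicitly, whereas the paper's ``less than one'' step assumes it without proof.

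One small slip: (AC) gives $SW(\tau')-u_B\ge u_S\ge F(\tau')H$, not the chain $SW(\tau')\ge u_B+u_S\ge F(\tau')H$ that you wrote. You do use the correct form when checking (HC) at $\tau$.
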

\begin{figure}[h]
\centering \ \caption{\protect\label{Figure:kappa-neutral}Dual-cutoff Equilibrium}
\bigskip{}
\begin{tikzpicture} 

\draw [line width=0.3mm, dotted] (0,-2.5)--(0,1.5);
\draw [line width=0.3mm, dotted] (15,-2.5)--(15,1.5);

\draw [line width=0.2mm, -] (0,-0.25)--(0,0.25);
\node [below left] at (0,0) {$\underline{c}$};
\draw [line width=0.6mm, dashed, blue] (0,0)--(5,0);

\draw [line width=0.2mm, -] (5,-0.25)--(5,0.25);
\node [below right] at (5,0) {$\phi$};
\draw [line width=0.6mm, dotted, green!75!black] (5,0)--(9,0);
\draw [line width=0.2mm, -] (9,-0.25)--(9,0.25);
\node [below left] at (9,0) {$\tau$};

\draw [line width=0.6mm, -, red] (9,0)--(15,0);
\draw [line width=0.3mm, -] (15,-0.25)--(15,0.25);
\node [below right] at (15,0) {$\bar{c}$};

\draw [line width=0.3mm, dotted] (5,0)--(5,-2.5);
\node[text width=5cm, align=center] at (2.5,-1.7) {$\pi^{\phi}_L(c)=1$ \\ \smallskip Buyer payoff  = $\gamma-c$};
\node[text width=5cm, align=center] at (9.6,-1.7) {$\pi^{\phi}_L(c) = \frac{c}{\gamma}$ \\\smallskip Buyer payoff = 0};

\draw [line width=0.3mm, dotted] (9,0)--(9,1.5);
\node[text width=6cm, align=center] at (4.5,1) {Types who invest};
\node[text width=6cm, align=center] at (12,1) {Types who do not invest};

    \end{tikzpicture}
\end{figure}

Figure \ref{Figure:kappa-neutral} illustrates the equilibrium constructed
in Proposition \ref{Proposition:Construction}. By (\ref{eq:BIC}),
when the seller sets $p=s$, every type $c>\phi$ is indifferent between
investing and not investing, whereas types $c\le\phi$ strictly prefer
to invest. Thus, the buyer strategy of investing if and only if $c\le\tau$
is supported whenever $\phi<\tau$. Each type $c\le\phi$ earns payoff
$\gamma-c$; each type $c>\phi$ is indifferent and thus earns the
payoff from not investing, which is zero. We verify seller obedience
to setting $p=s$ given this buyer strategy below.

The equilibrium separates the two key margins identified in the previous
subsection into two cutoffs: $\tau$ determines surplus creation,
and $\phi$ determines surplus division. On $\left[\underline{c},\phi\right]$,
the buyer invests and retains the entire investment gain. On $(\phi,\tau]$,
the buyer still invests, but it is the seller who, in expectation,
captures the investment gain, leaving the buyer with zero payoff.
On $(\tau,\bar{c}]$, the buyer does not invest, and trade breaks
down when the seller sets price $H$, which occurs with probability
$1-c/\gamma$. This inefficiency is required to incentivize types
$c>\tau$ not to invest and maintain valuation $L$, which in turn
is necessary to incentivize the seller to charge price $L$ with positive
probability in equilibrium, allowing types $c\le\phi$ to then retain
their investment surplus. 

\subsubsection*{Proof of Proposition \ref{Proposition:Construction}.}
\begin{proof}
Suppose $\hat{\tau}$ satisfies (\ref{eq:AC}) for $\left(u_{B},u_{S}\right)$.
Since $SW$ is strictly increasing with $SW\left(\underline{c}\right)<L\le u_{B}+u_{S}\le SW\left(\hat{\tau}\right)$,
there exists $\tau\le\hat{\tau}$ such that 
\begin{equation}
SW\left(\tau\right)-u_{B}=u_{S}\ge F\left(\tau\right)H,\label{eq:tau-prime}
\end{equation}
where the inequality follows from $F\left(\tau\right)H\le F\left(\hat{\tau}\right)H\le u_{S}$.
Furthermore, writing $H-c=\left(\gamma-c\right)+L$, we obtain
\[
SW\left(\tau\right)=\int_{\underline{c}}^{\tau}\left(\gamma-c\right)f\left(c\right)dc+L\left[F\left(\tau\right)+\int_{\tau}^{\bar{c}}\frac{c}{\gamma}f\left(c\right)dc\right]<\int_{\underline{c}}^{\tau}\left(\gamma-c\right)f\left(c\right)dc+L,
\]
where the inequality is because the square bracketed term is the total
probability of trade, which is less than one. Hence 
\[
u_{B}=SW\left(\tau\right)-u_{S}\le SW\left(\tau\right)-L<\int_{\underline{c}}^{\tau}\left(\gamma-c\right)f\left(c\right)dc,
\]
so there exists $\phi\in[\underline{c},\tau)$ such that $u_{B}=\int_{\underline{c}}^{\phi}\left(\gamma-c\right)f\left(c\right)dc$.
This proves existence of $\phi$ and $\tau$ satisfying (\ref{eq:AttainRule}).

\bigskip

\textbf{Buyer optimality and payoff.} Already shown above.

\bigskip

\textbf{Seller optimality and payoff.} Given the buyer's strategy
and signal structure $\pi^{\phi}$, let $Q_{s}$ denote the probability
of signal $s$, and let $Q_{s}^{I}$ denote the joint probability
of receiving signal $s$ \emph{and }the buyer having invested.\footnote{$Q_{s}=\int_{\underline{c}}^{\bar{c}}\pi_{s}^{\phi}\left(c\right)f\left(c\right)dc$;
$Q_{s}^{I}=\int_{\underline{c}}^{\tau}\pi_{s}^{\phi}\left(c\right)f\left(c\right)dc$.} The seller's payoff is $u_{S}$ by construction from (\ref{eq:AttainRule})
and can be expressed as\footnote{If the seller sets $p=L$, trade always occurs; if she sets $p=H$,
trade occurs only if the buyer invested.} 
\[
u_{S}=Q_{L}L+Q_{H}^{I}H.
\]

The seller obeys $L$ if $L\ge\left(Q_{L}^{I}/Q_{L}\right)H$. Using
$Q_{L}^{I}+Q_{H}^{I}=F\left(\tau\right)$,
\[
Q_{L}L-Q_{L}^{I}H=Q_{L}L+Q_{H}^{I}H-F\left(\tau\right)H=u_{S}-F\left(\tau\right)H\ge0,
\]
where the last inequality follows from (\ref{eq:tau-prime}). Thus,
the seller obeys $L$. 

Next, the seller obeys $H$ if $\left(Q_{H}^{I}/Q_{H}\right)H\ge L$.
Using $Q_{L}+Q_{H}=1$,
\[
Q_{H}^{I}H-Q_{H}L=Q_{H}^{I}H+Q_{L}L-L=u_{S}-L\ge0,
\]
so the seller also obeys $H$.
\end{proof}
\bigskip

The dual-cutoff equilibrium also shows which information hurts and
benefits the buyer. The seller only cares about the buyer's valuation,
so profiling can be considered accurate when investing types $c\le\tau$
generate signal $H$ and noninvesting types $c>\tau$ generate signal
$L$. Accuracy harms the investing types, since they retain their
investment surplus only when mis-profiled and offered $L$. In contrast,
noninvesting types' payoff is zero regardless. By setting $\pi_{L}\left(c\right)=c/\gamma$
for $c>\tau$, the noninvesting types' profiling accuracy is maximized
subject to maintaining their incentive to not invest.\footnote{This property is crucial for attaining payoffs on the Pareto frontier.
Online appendix \ref{OA:ParetoNSC} examines this point further.} This accuracy in turn lets the seller tolerate mis-profiling of the
investing types---allowing type $c\le\phi$ to be mis-profiled with
probability one---who strictly benefit from it. This also explains
why simple policies like a ban on profiling are inefficient: a ban
simply spreads error across all types and wastes some of it on noninvesting
types, for whom it has no welfare value.

\section{\protect\label{Section:MPC}Comparative Statics of the Type Distribution}

How do the attainable payoffs depend on the distribution of the buyer's
type? With fixed valuations, this question is less meaningful since
any division of the first-best surplus is attainable. By contrast,
with endogenous valuation, the type distribution also determines the
severity of hold-up risk, which then affects the attainable set. In
this section, we consider the effects of a mean-preserving spread
of the type distribution, which leaves the first-best surplus $H-\mathbb{E}[c]$
unchanged.

Let $G$ be another atomless distribution with full support on $[\underline{c},\bar{c}]$.
For each $J\in\{F,G\}$, let $u_{B}^{J*}$ denote the highest attainable
buyer payoff and $\overline{U}_{S}^{J}(\cdot)$ the corresponding
Pareto frontier of the attainable payoffs. Since the IR payoffs are
the same under $G$ and $F$, the comparison across distributions
is governed by the frontier.
\begin{prop}
\label{prop:mps} If $F$ is a mean-preserving spread of $G$, then
\begin{equation}
u_{B}^{F*}\geq u_{B}^{G*}\quad\text{and}\quad\overline{U}_{S}^{F}(u_{B})\geq\overline{U}_{S}^{G}(u_{B})\quad\text{for every }u_{B}\in[0,u_{B}^{G*}].\label{eq:MPS-outcome}
\end{equation}
Consequently, the attainable payoff set under $F$ contains that under
$G$.\footnote{$F$ is a mean-preserving spread of $G$ if $\int_{\underline{c}}^{c}F\left(z\right)dz\ge\int_{\underline{c}}^{c}G\left(z\right)dz$
for all $c\in\left[\underline{c},\bar{c}\right]$, with equality at
$c=\bar{c}$. If the inequality is strict for all $c\in\left(\underline{c},\bar{c}\right)$,
then both inequalities in (\ref{eq:MPS-outcome}) are also strict.}
\end{prop}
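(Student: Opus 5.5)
The plan is to reparametrize the Pareto frontier by the seller's payoff (equivalently by the investment probability $q=F(\tau)$) rather than by the cutoff $\tau$. In that parametrization the type distribution enters only through a quantile integral, which a mean-preserving spread orders directly.

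\textbf{Step 1: a quantile formula for the frontier.} By (\ref{eq:Frontier-(uS,uB)}), frontier points are $(\xi(\tau),F(\tau)H)$. Expanding $SW$ and writing $\mu=\mathbb{E}[c]$, which is the same under $F$ and $G$, I expect
\[
\xi(\tau)=\frac{L}{\gamma}\mu-\Bigl(1+\frac{L}{\gamma}\Bigr)\int_{\underline{c}}^{\tau}c\,f(c)\,dc .
\]
Substituting $q=F(\tau)$ turns the last integral into $\int_{0}^{q}F^{-1}(t)\,dt$. So at seller payoff $u_S=qH$, the frontier buyer payoff is
\[
B_{F}(q):=\frac{L}{\gamma}\mu-\Bigl(1+\frac{L}{\gamma}\Bigr)\int_{0}^{q}F^{-1}(t)\,dt .
\]
This is strictly decreasing in $q$, consistent with (\ref{eq:xi-prime}), and the same formula holds for $G$.

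\textbf{Step 2: order the quantile integrals.} I will show that if $F$ is a mean-preserving spread of $G$, then $\int_{0}^{q}F^{-1}\le\int_{0}^{q}G^{-1}$ for every $q\in[0,1]$, with equality at $q=1$. This is the standard dual (generalized Lorenz) characterization of second-order dominance under equal means. I would prove it directly from the footnote's condition $\int_{\underline{c}}^{c}F\ge\int_{\underline{c}}^{c}G$, using the identity
\[
\int_{0}^{q}F^{-1}(t)\,dt=\min_{c}\Bigl\{cq-\int_{\underline{c}}^{c}F(z)\,dz\Bigr\}.
\]
The minimum is attained at $c=F^{-1}(q)$, and this is a Legendre-type duality. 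Under the MPS condition the objective inside the minimum is pointwise smaller for $F$, so its minimum is too. This is the main technical step, but it should go through cleanly. The same argument gives strict inequality for interior $q$ when the MPS inequality is strict. It follows that $B_F(q)\ge B_G(q)$ for every $q$.

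\textbf{Step 3: transfer to the two claims.}

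\emph{Buyer-optimal payoff.} The buyer-optimal point corresponds to $q=L/H$ under both distributions, by (\ref{eq:tau*}). Hence $u_B^{F*}=B_F(L/H)\ge B_G(L/H)=u_B^{G*}$.

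\emph{Frontier ordering.} Take $u_B\in[0,u_B^{G*}]$ and let $q_G$ solve $B_G(q_G)=u_B$, so that $\overline{U}_S^G(u_B)=q_GH$. Step 2 gives $B_F(q_G)\ge u_B\ge0$. Since $B_F$ is continuous and strictly decreasing, the solution $q_F$ of $B_F(q_F)=u_B$ exists in the frontier range and satisfies $q_F\ge q_G$. Therefore $\overline{U}_S^F(u_B)=q_FH\ge\overline{U}_S^G(u_B)$.

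\emph{Containment and strictness.} Containment of the attainable sets then follows from Theorem \ref{Theorem:Characterization}, because the IR bounds are identical under $F$ and $G$. Strictness follows from the strict version of Step 2, since $q=L/H$ and $q_G$ lie in the interior of $(0,1)$.
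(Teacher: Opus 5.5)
Your argument is essentially the paper's proof. Since $1+L/\gamma=H/\gamma$, your $B_J(q)$ equals $\frac{L}{\gamma}\mu-\frac{H}{\gamma}\mu\,\mathcal{L}_J(q)$, where $\mathcal{L}_J(q)=\frac{1}{\mu}\int_0^q J^{-1}(t)\,dt$ is the Lorenz curve the paper uses. The paper likewise evaluates at $q=L/H$ for the buyer-optimal point and uses monotonicity of $\mathcal{L}_F$ to compare frontier cutoffs, exactly as in your Step 3.

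The one difference is that the paper cites the Lorenz ordering under a mean-preserving spread (Shaked--Shanthikumar, Thm.\ 3.A.10), whereas you prove it. Your duality identity there has the wrong extremum. The function $h(c)=cq-\int_{\underline{c}}^{c}F(z)\,dz$ has $h'(c)=q-F(c)$ and is concave, so $c=F^{-1}(q)$ is a maximizer, and the correct identity is $\int_0^qF^{-1}(t)\,dt=\max_{c}h(c)$. The minimum of $h$ over $[\underline{c},\bar{c}]$ sits at an endpoint, where $h$ is identical under $F$ and $G$ because $\int_{\underline{c}}^{\bar{c}}F=\int_{\underline{c}}^{\bar{c}}G$. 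So the ``min'' version would make the two quantile integrals equal and contradict your strictness claim.

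With $\max$ in place of $\min$, your pointwise comparison goes through verbatim. Strictness for $q\in(0,1)$ then follows by evaluating at the interior point $c_F=F^{-1}(q)$:
$\int_0^qF^{-1}=c_Fq-\int_{\underline{c}}^{c_F}F<c_Fq-\int_{\underline{c}}^{c_F}G\le\int_0^qG^{-1}$.
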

To provide some intuition, substituting $H=\gamma+L$ into $\xi(\tau)=SW(\tau)-F(\tau)H$
and simplifying yields 
\begin{equation}
\xi(\tau)=\frac{L}{\gamma}\,\mathbb{E}[c]\;-\;\frac{H}{\gamma}\int_{\underline{c}}^{\tau}cf(c)\,dc.\label{eq:xi-identity}
\end{equation}
Fixing the mean of $c$, the two players' frontier payoffs are determined
by different features of the investing set. The seller's payoff $F(\tau)H$
depends only on the \emph{mass} of the investing types, whereas the
surplus left for the buyer depends on the aggregate\emph{ }cost\emph{
}of the investing types, $\int_{\underline{c}}^{\tau}cf(c)\,dc$.
A more dispersed distribution allows any given mass of investment
to be supplied by lower-cost types, thereby relaxing the trade-off
between surplus creation and division identified in Proposition \ref{Proposition:Pareto-frontier}.

Proposition \ref{prop:mps} has implications for the buyer's own information.
Suppose the buyer is uncertain about his own investment cost and instead
observes an imperfect signal about $c$. For example, a firm adapting
its operations around a vendor may only estimate the disruption cost
in advance, with the true cost realized during implementation. Let
$x$ denote his posterior expected cost from the signal. Because $c$
enters his payoff linearly, his investment decision depends only on
$x$. If profiling is based on $x$, the resulting game is equivalent
to the baseline model with the distribution of $x$ as the type distribution.
A Blackwell-more-informative learning technology induces a mean-preserving
spread of posterior expected costs. Proposition \ref{prop:mps} therefore
implies that a better-informed buyer supports a larger set of attainable
payoffs, including outcomes that raise both players' payoffs. 

\section{\protect\label{Section:Multiple-Investments}Multiple Investment
Options}

The binary investment assumption is not essential. What matters is
that the investment option with the highest average return is also
the efficient option for all types. When this holds, the set of attainable
payoffs is unchanged because we can then restrict attention to the
buyer choosing only between not investing and the efficient option,
which collapses to the binary case.

To illustrate this, suppose the buyer chooses an investment option
$i\in\left\{ 0,1,\dots,n\right\} $. Option $0$ denotes no investment
and yields $v_{0}=L$. Each option $i\ge1$ raises valuation to $v_{i}$
at cost $ic$, where $v_{i}$ is increasing in $i$ and $v_{1}>L$.
The baseline model is $n=1$ and $v_{1}=H$.
\begin{assumption}
\label{Assumption}$\left(v_{n}-L\right)/n\ge\left(v_{i}-L\right)/i$
for all $i=1,\dots,n$, and $v_{n}-n\bar{c}>L$.
\end{assumption}
The first condition says that option $n$ has the highest average
return, and the second condition says that choosing option $n$ is
socially efficient relative to no investment even for the highest-cost
type.\footnote{To simplify exposition, we assume that the efficient option is the
highest option $n$. This is not necessary. Online appendix \ref{OA:Multiple-Investments}
shows that the same characterization holds when the efficient option
is an interior option.} These two conditions jointly imply that option $n$ is the most efficient
option for \emph{every} type.
\begin{lem}
\label{Lemma:MultiInvestment}Under Assumption \ref{Assumption},
\begin{equation}
v_{n}-nc>v_{i}-ic\ \ \ \text{for all }i<n\text{ and }c\in\left[\underline{c},\bar{c}\right].\label{eq:Efficiency-MultipleAction}
\end{equation}
\end{lem}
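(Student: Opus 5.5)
We must show $v_n - nc > v_i - ic$ for every $i<n$ and every $c\in[\underline{c},\bar{c}]$. The plan is to treat the case $i=0$ separately, and for $1\le i<n$ to combine the two parts of Assumption \ref{Assumption} through a convex-combination argument.

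\emph{Case $i=0$.} Here $v_0=L$ and the cost is $0$, so the claim reads $v_n-nc>L$. Since $v_n-nc$ is decreasing in $c$, we have $v_n-nc\ge v_n-n\bar{c}$, and the second condition of Assumption \ref{Assumption} gives $v_n-n\bar{c}>L$.

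\emph{Case $1\le i<n$.} I would rewrite the target inequality as
\[
(v_n-L)-(v_i-L) > (n-i)c .
\]
The first condition of Assumption \ref{Assumption} gives $v_i-L\le \tfrac{i}{n}(v_n-L)$. Hence
\[
(v_n-L)-(v_i-L)\ \ge\ \tfrac{n-i}{n}(v_n-L).
\]
The second condition gives $v_n-L>n\bar{c}\ge nc$, so $\tfrac{n-i}{n}(v_n-L)>(n-i)c$ because $n-i>0$. Chaining these two inequalities yields the strict inequality for every $c\le\bar{c}$.

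There is no real obstacle here. The only point requiring care is where the strictness comes from: the first condition of the Assumption may hold with equality, so the strict inequality must be supplied by the second condition, together with $n-i>0$.
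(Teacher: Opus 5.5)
Your proof is correct and follows essentially the same route as the paper: use the first condition to get $v_i - L \le \frac{i}{n}(v_n - L)$, hence $v_n - v_i \ge \frac{n-i}{n}(v_n - L) > (n-i)\bar{c} \ge (n-i)c$ by the second condition. Your separate treatment of $i=0$ is fine but not needed, since the paper's chain covers it implicitly: the bound on $v_i$ holds with equality at $i=0$, where $v_0 = L$.
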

Define

\[
SW_{n}(\tau):=\int_{\underline{c}}^{\tau}(v_{n}-nc)f(c)\,dc+\int_{\tau}^{\bar{c}}\frac{nc}{v_{n}-L}Lf(c)\,dc.
\]

\begin{prop}
\label{Proposition:Multiple-Investment}Under Assumption \ref{Assumption},
an IR pair $(u_{B},u_{S})$ is attainable if and only if there exists
$\tau\in[\underline{c},\bar{c}]$ such that
\begin{equation}
SW_{n}(\tau)-u_{B}\ \underbrace{\ge}_{\text{SC'}}\ u_{S}\ \underbrace{\ge}_{\text{HC'}}\ F(\tau)v_{n}\tag{AC'}\label{eq:AC-prime}
\end{equation}
 
\end{prop}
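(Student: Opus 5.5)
The proof splits into necessity and sufficiency, mirroring Proposition \ref{Proposition:AC}. Sufficiency is the easy direction. If $(u_B,u_S)$ satisfies (\ref{eq:AC-prime}), we restrict the buyer to options $\{0,n\}$. This is the baseline model with $H$ replaced by $v_n$, $\gamma$ by $\gamma_n:=v_n-L$, and cost $c$ by $nc$. The construction of Proposition \ref{Proposition:Construction} then goes through with $\pi_L^\phi(c)=nc/\gamma_n$ for $c>\phi$.

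We must additionally check that no type deviates to an intermediate option $i$. Under a seller who never prices strictly between $L$ and $v_n$, option $i<n$ pays the buyer only when he faces $p=L$. So his payoff from $i$ is $q_L(c)(v_i-L)-ic$. By Assumption \ref{Assumption}, $v_i-L\le (i/n)(v_n-L)$, so this payoff is at most $(i/n)\,[q_L(c)\gamma_n-nc]$. That bound is a convex combination of the payoffs of options $0$ and $n$, and hence weakly below their maximum.

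The seller may also consider prices $v_i$ off the support $\{L,v_n\}$. Since only valuations $L$ and $v_n$ occur on path, any such price is dominated by $L$ or $v_n$. So seller obedience reduces to the two inequalities verified in Proposition \ref{Proposition:Construction}, with $H$ replaced by $v_n$. This gives sufficiency.

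Necessity is the main obstacle, because in an arbitrary equilibrium types may choose intermediate options and the seller may charge intermediate prices. I would proceed as follows.
\begin{enumerate}
\item Let $\mu$ be the probability that the buyer chooses some option $i\ge1$, i.e.\ has valuation above $L$, and let $\hat\tau=F^{-1}(\mu)$.
\item For the hold-up constraint, I would first try to obtain (HC') with $\mu$ replaced by the mass choosing option $n$. The deviation of always charging $v_n$ yields $v_n$ times that mass, so $u_S\ge F(\tau)v_n$ for some $\tau$ whose $F$-mass equals the mass of option-$n$ investors.
\item Accordingly, I would redefine $\hat\tau$ by $F(\hat\tau)=$ mass of option-$n$ investors. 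Intermediate investors are then lumped with the ``non-$n$'' group.
\item For that non-$n$ group, I need an upper bound on the surplus of each type $c$. Let $q_L(c)$ be the probability of a price at most $L$. Because option $n$ is not chosen, optimality gives $q_L(c)\gamma_n\le$ the payoff from option $n$ at prices $p\le L$, which in turn should be $\le nc$ plus whatever that type gets from its own choice.
\item I would try to show the surplus of a non-$n$ type is at most $(nc/\gamma_n)L$ plus the buyer rent it receives, with any extra surplus accruing as its own buyer payoff. Lemma \ref{Lemma:MultiInvestment} and the convexity bound above are the tools here.
\item Summing over types then gives $u_B+u_S\le SW_n(\hat\tau)$, since sorting the option-$n$ investors to the lowest costs only raises the bound, as in the binary case.
\end{enumerate}

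The delicate point is step 5. With intermediate prices $p=v_i$, a type choosing option $i$ can trade at $v_i$ and create surplus $v_i-ic$ that exceeds $(nc/\gamma_n)L$ while earning nothing.

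To handle this I would instead bound total surplus minus buyer payoff, i.e.\ the seller revenue, directly. Revenue from non-$n$ types at price $v_i$ can be dominated via the seller's option to charge $v_n$. If that fails, I would apply the convexity argument from the sufficiency step at the level of the seller's total revenue. The aim is to show that replacing each intermediate investor by a mixture of option $0$ and option $n$ in proportions $(1-i/n,\,i/n)$ weakly raises the constraint slack. That would reduce the general equilibrium to the binary case and allow invoking Proposition \ref{Proposition:AC} directly.
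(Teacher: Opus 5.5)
Your sufficiency direction is correct and is the paper's dual-cutoff construction. Your bound $q_L(c)(v_i-L)-ic\le\tfrac{i}{n}\,[q_L(c)\gamma_n-nc]$ disposes of all intermediate deviations at once, a bit more compactly than the paper's split at $\phi$. Steps 2--3, which give (HC') from the deviation to always charging $v_n$, also match the paper.

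The gap is in necessity, steps 4--6: you never obtain a per-type bound on the \emph{total} surplus of a non-$n$ type. Step 4 uses only prices $p\le L$ and compares option $n$ with the type's own payoff. So step 5 gives at best ``$(nc/\gamma_n)L$ plus that type's rent''. Summing and sorting then yields $u_B+u_S\le SW_n(\hat\tau)+R$, where $R$ is the rent of non-$n$ types. This is strictly weaker than (SC') whenever $R>0$, so step 6 does not follow. Step 4 also leaves trades at prices in $(L,v_j]$ uncontrolled, which is exactly the difficulty you flag. Your fallbacks are not yet arguments. Replacing $i$-investors by a $(1-i/n,\,i/n)$ mixture of options $0$ and $n$ changes the mass of option-$n$ investors, hence the right side of (HC'). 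It also changes the seller's best responses, and you do not say which slack is preserved.

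The missing idea is to compare the chosen option $j<n$ with option $n$ price by price. Define $q_j(c)$ as the probability of a price $p\le v_j$, i.e.\ the probability that option $j$ trades. Because the signal depends only on $c$, type $c$ faces the same price distribution under either choice. At each $p\le v_j$, switching to $n$ raises the trade value by exactly $v_n-v_j$; at each $p>v_j$ it raises it by $\max\{v_n-p,0\}\ge 0$. Optimality of $j$ therefore gives $q_j(c)(v_n-v_j)\le (n-j)c$, and hence
\[
q_j(c)v_j-jc\;\le\;c\Big[\frac{(n-j)v_j}{v_n-v_j}-j\Big]\;\le\;\frac{nL}{v_n-L}\,c .
\]
The last step uses that the bracket is increasing in $v_j$, together with $v_j\le\frac{n-j}{n}L+\frac{j}{n}v_n$ from Assumption \ref{Assumption}; the case $j=0$ is included. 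This bounds total surplus with rent included, so your sorting step then delivers (SC') directly. It also resolves your ``delicate point''. A type that trades at or below $v_i$ with probability one would strictly prefer option $n$ by Lemma \ref{Lemma:MultiInvestment}. More generally, the incentive constraint caps the trade probability, so intermediate investors cannot generate more than $\frac{nL}{v_n-L}c$.
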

With Proposition \ref{Proposition:Multiple-Investment}, the payoff
characterization from the binary-investment model carries over by
replacing $H$ with $v_{n}$ and the investment cost $c$ with $nc$.
The additional intermediate options therefore do not affect the attainable
set.

\subsubsection*{Proof of Proposition \ref{Proposition:Multiple-Investment}.}
\begin{proof}
For necessity of (\ref{eq:AC-prime}), fix an equilibrium attaining
$\left(u_{B},u_{S}\right)$, and let $\tau$ be defined such that
$F\left(\tau\right)$ equals the probability that the buyer chooses
$n$. The seller cannot strictly benefit from deviating to always
setting $p=v_{n}$, so $u_{S}\ge F(\tau)v_{n}$, which establishes
(HC'). 

For (SC'), first note that the surplus generated from the buyer choosing
option $n$ is $v_{n}-nc$. Now consider a type $c$ who chooses option
$j<n$ with positive probability, and let $q_{j}(c)$ denote the probability
that he is offered $p\le v_{j}$. Deviating from option $j$ to $n$
would raise his payoff by at least $q_{j}(c)(v_{n}-v_{j})-c(n-j)$.
Buyer optimality therefore requires
\begin{equation}
q_{j}(c)\le c\frac{n-j}{v_{n}-v_{j}}.\label{eq:q_j(c)}
\end{equation}
Assumption \ref{Assumption} implies that 
\begin{equation}
v_{j}\le L+\frac{j}{n}(v_{n}-L)=\frac{n-j}{n}L+\frac{j}{n}v_{n}\ \ \ \ \forall j\le n.\label{eq:v_j}
\end{equation}
Since trade with a buyer who has chosen option $j$ occurs if and
only if $p\le v_{j}$, the expected surplus generated when type $c$
chooses option $j$ is at most
\begin{align}
q_{j}(c)v_{j}-jc\  & \le\ \left[\frac{n-j}{v_{n}-v_{j}}v_{j}-j\right]c\ \le\ \frac{nL}{v_{n}-L}c.\label{eq:surplus-inequality}
\end{align}
The first inequality in (\ref{eq:surplus-inequality}) follows from
substituting in the upper bound of $q_{j}\left(c\right)$ from (\ref{eq:q_j(c)}).
The square-bracketed term is strictly increasing in $v_{j}$. Substituting
in the upper bound of $v_{j}$ from (\ref{eq:v_j}) and simplifying
gives the second inequality in (\ref{eq:surplus-inequality}). Thus,
the surplus generated when type $c$ chooses option $j\ne n$ is at
most $\frac{nc}{v_{n}-L}L$. The argument for (\ref{eq:SC}) then
yields (SC') here.

Sufficiency of (\ref{eq:AC-prime}) uses the same dual-cutoff construction
in Proposition \ref{Proposition:Construction}. The signal structure
is supported on signals $\mathbb{S}=\left\{ L,v_{n}\right\} $, with
distribution
\[
\pi_{L}(c)=1\text{ for }c\le\phi,\qquad\pi_{L}(c)=\frac{nc}{v_{n}-L}\text{ for }c>\phi.
\]
The buyer strategy is: type $c\le\tau$ chooses option $n$, and type
$c>\tau$ does not invest. Seller obedience is exactly as in Proposition
\ref{Proposition:Construction}. For buyer optimality, we must additionally
check that no type has a profitable deviation to an intermediate option
$j\in\left\{ 1,\dots,n-1\right\} $. Type $c$'s payoff from option
$j$ is $w_{j}\left(c\right):=\pi_{L}\left(c\right)\left[v_{j}-L\right]-jc$,
while option $n$ yields $w_{n}\left(c\right):=\pi_{L}\left(c\right)\left[v_{n}-L\right]-nc$.
For $c\le\phi$, $\pi_{L}\left(c\right)=1$, so $w_{n}\left(c\right)>w_{j}\left(c\right)$
by (\ref{eq:Efficiency-MultipleAction}). For $c>\phi$, 
\[
w_{j}\left(c\right)=\frac{nc}{v_{n}-L}(v_{j}-L)-cj=c\left[\frac{n(v_{j}-L)}{v_{n}-L}-j\right]\le0,
\]
where the inequality is the first condition of Assumption \ref{Assumption},
so option $j$ is weakly worse than not investing. This establishes
buyer optimality.
\end{proof}

\section{\protect\label{Section:Conclusion}Concluding Remarks}

We have studied the welfare implications of providing the seller with
buyer information when the buyer's valuation is endogenous to pre-trade
investment. Payoff attainability in such environments is determined
by a hold-up risk constraint that restricts not only how much surplus
can be created, but also how that surplus can be divided. In contrast
to fixed-valuation environments, the first-best surplus is unattainable,
information may hurt the seller, and protecting buyer welfare can
require reducing investment, even though investment is socially efficient.
Beyond characterizing the payoff outcomes, we show how each is attained
within a class of simple dual-cutoff equilibria and identify which
information hurts the buyer and which is payoff-neutral. 

More broadly, endogenous valuation raises informational questions
beyond those studied here. We have focused on private information
about the buyer's investment cost, but the buyer could instead hold
private information about the value his investment creates. Relatedly,
the seller could receive information directly about the buyer's investment
choice, rather than only about the characteristics that influence
it.\footnote{\citet{Lau_2008(RJE)} and \citet{Nguyen_Tan_2019(RJE)} study this
information design problem but do not allow the buyer to have ex ante
private information.} We leave these questions to future research.

\appendix

\section{\protect\label{Section:MainAppendix}Appendix: Supporting Results}
\begin{lem}
\label{Lemma:Frontier-Sufficiency.}An IR pair $\left(u_{B},u_{S}\right)$
is on the Pareto frontier if and only if there exists $\tau$ such
that $SW\left(\tau\right)-u_{B}=u_{S}=F\left(\tau\right)H$.
\end{lem}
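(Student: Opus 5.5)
The plan is to prove both directions using Proposition \ref{Proposition:AC}, which says that an IR pair is attainable if and only if (\ref{eq:AC}) holds for some $\tau$, together with the monotonicity facts already recorded: $SW'>0$, $\left(F\left(\tau\right)H\right)'=Hf\left(\tau\right)>SW'\left(\tau\right)$, and $\xi'<0$ from (\ref{eq:xi-prime}).

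\emph{Necessity.} Take a Pareto-frontier pair and a $\tau$ at which (\ref{eq:AC}) holds. This direction repeats the argument in the text before (\ref{eq:Frontier-(uS,uB)}). If (\ref{eq:SC}) were slack, we could raise $u_{B}$ slightly while keeping $\tau$ and $u_{S}$ fixed; the new pair is still IR and attainable, which is a contradiction. If (\ref{eq:SC}) binds but (\ref{eq:HC}) is slack, continuity of $F$ lets us move to a slightly larger $\tau'$. At $\tau'$, (\ref{eq:HC}) still holds and (\ref{eq:SC}) becomes slack because $SW$ is strictly increasing, which returns us to the first case. In the argument we must confirm that $\tau<\bar{c}$ so that a larger $\tau'$ exists. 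This holds because $SW\left(\bar{c}\right)<F\left(\bar{c}\right)H$: at $\tau=\bar{c}$, (\ref{eq:AC}) would give $u_{B}\le SW\left(\bar{c}\right)-H<0$, violating IR.

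\emph{Sufficiency.} Suppose $SW\left(\tau\right)-u_{B}=u_{S}=F\left(\tau\right)H$ with the pair IR. Then $u_{B}=\xi\left(\tau\right)$ and $u_{S}=F\left(\tau\right)H$. Attainability follows from Proposition \ref{Proposition:AC}. Suppose, for contradiction, that some attainable $\left(u_{B}',u_{S}'\right)$ weakly dominates it with at least one strict inequality. By Proposition \ref{Proposition:AC}, there is a $\tau'$ with $F\left(\tau'\right)H\le u_{S}'$ and $u_{B}'\le SW\left(\tau'\right)-u_{S}'$. Chaining these gives
\[
u_{B}'\le SW\left(\tau'\right)-u_{S}'\le SW\left(\tau'\right)-F\left(\tau'\right)H=\xi\left(\tau'\right).
\]
Two cases follow.

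If $\tau'\le\tau$, then $u_{S}'\ge F\left(\tau\right)H\ge F\left(\tau'\right)H$. The chain then gives $u_{B}'\le SW\left(\tau'\right)-u_{S}'\le SW\left(\tau\right)-F\left(\tau\right)H=\xi\left(\tau\right)=u_{B}$, using $SW\left(\tau'\right)\le SW\left(\tau\right)$. So $u_{B}'=u_{B}$ and all inequalities bind. In particular $u_{S}'=SW\left(\tau'\right)-u_{B}\le SW\left(\tau\right)-u_{B}=u_{S}$, so the dominance is not strict.

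If $\tau'>\tau$, then $u_{B}'\le\xi\left(\tau'\right)<\xi\left(\tau\right)=u_{B}$ because $\xi$ is strictly decreasing. This contradicts $u_{B}'\ge u_{B}$.

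The main step to handle carefully is the first case. There the weak-dominance bookkeeping has to rule out a strict gain in either coordinate, which comes down to equality forcing $SW\left(\tau'\right)=SW\left(\tau\right)$ and hence $u_{S}'\le u_{S}$. The second case follows directly from $\xi'<0$.
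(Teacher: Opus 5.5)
Your proof is correct and follows essentially the same route as the paper. Necessity is the main-text binding argument, and your check that any $\tau$ satisfying (\ref{eq:AC}) with an IR pair has $\tau<\bar{c}$ makes explicit a step the text leaves implicit; sufficiency, as in the paper, combines Proposition~\ref{Proposition:AC} with $\xi'<0$ and the monotonicity of $SW$, except that the paper derives $\tau'\le\tau$ from $\xi(\tau')\ge u_B'\ge u_B=\xi(\tau)$ and closes with the one-line total-surplus contradiction $u_B'+u_S'>u_B+u_S=SW(\tau)\ge SW(\tau')$, where you split into cases and compare coordinate by coordinate.
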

\begin{proof}
Necessity is established in the main text. For sufficiency, suppose
there exists $\tau$ such that $SW(\tau)-u_{B}=u_{S}=F(\tau)H$. Suppose,
toward a contradiction, that $(u_{B},u_{S})$ is not on the Pareto
frontier. Then there exists an attainable pair $(u'_{B},u'_{S})$
with $u'_{B}\geq u_{B}$, $u'_{S}\geq u_{S}$, and at least one inequality
strict. By Proposition \ref{Proposition:AC}, there exists $\tau'$
satisfying (\ref{eq:AC}) for $(u'_{B},u'_{S})$. Hence
\[
\xi(\tau')=SW(\tau')-F(\tau')H\geq u'_{B}\geq u_{B}=\xi(\tau).
\]
Since $\xi$ is strictly decreasing, $\tau'\leq\tau$. Since $SW$
is strictly increasing, $SW(\tau')\leq SW(\tau)=u_{B}+u_{S}$. But
$(u'_{B},u'_{S})$ strictly Pareto-dominates $(u_{B},u_{S})$, so
$u'_{B}+u'_{S}>u_{B}+u_{S}=SW(\tau)$, which contradicts the surplus
constraint $u'_{B}+u'_{S}\leq SW(\tau')$. 
\end{proof}
\begin{lem}
\label{Lemma:tau*}$\tau_{B}^{*}$ and $\tau_{S}^{*}$---defined
in (\ref{eq:tau*})---satisfy $\underline{c}<\tau_{B}^{*}<\tau_{S}^{*}<\bar{c}$.
If $\left(u_{B},u_{S}\right)$ is on the Pareto frontier, there exists
a unique $\tau\in\left[\tau_{B}^{*},\tau_{S}^{*}\right]$ such that
$SW\left(\tau\right)-u_{B}=u{}_{S}=F\left(\tau\right)H$.
\end{lem}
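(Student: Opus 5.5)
We prove the two parts of Lemma \ref{Lemma:tau*} in turn. For the ordering, we use the boundary values of $F$ and of $\xi$. Since $F(\underline{c})=0<L/H<1=F(\bar{c})$ and $F$ is continuous and strictly increasing (positive density), the equation $F(\tau_B^*)H=L$ has a unique solution $\tau_B^*\in(\underline{c},\bar{c})$.

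For $\tau_S^*$, we evaluate $\xi$ at the endpoints.
\begin{itemize}
\item At $\underline{c}$, the identity (\ref{eq:xi-identity}) gives $\xi(\underline{c})=\frac{L}{\gamma}\mathbb{E}[c]>0$.
\item At $\bar{c}$, it gives $\xi(\bar{c})=\frac{L-H}{\gamma}\mathbb{E}[c]=-\mathbb{E}[c]<0$. Equivalently, $SW(\bar{c})<H=F(\bar{c})H$, the fact behind Corollary \ref{Corollary:No-FB}.
\item By (\ref{eq:xi-prime}), $\xi$ is continuous and strictly decreasing.
\end{itemize}
Hence there is a unique $\tau_S^*\in(\underline{c},\bar{c})$ with $\xi(\tau_S^*)=0$.

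To show $\tau_B^*<\tau_S^*$, it suffices, since $\xi$ is strictly decreasing, to show $\xi(\tau_B^*)>0$. Write $SW(\tau)=\int_{\underline{c}}^{\tau}(\gamma-c)f\,dc+L[F(\tau)+\int_\tau^{\bar{c}}\frac{c}{\gamma}f\,dc]$, as in the proof of Proposition \ref{Proposition:Construction}. Using $F(\tau_B^*)H=L$, we get
\[
\xi(\tau_B^*)=\int_{\underline{c}}^{\tau_B^*}(\gamma-c)f\,dc+L\Big[F(\tau_B^*)+\int_{\tau_B^*}^{\bar{c}}\frac{c}{\gamma}f\,dc\Big]-L .
\]
To sign this, we compare it directly with $u_B^0$ from (\ref{eq:u^0_B}), and check it along the way. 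Alternatively, we prove positivity as follows. Since $\gamma>\bar{c}\ge c$, we have $\int_{\underline{c}}^{\tau_B^*}(\gamma-c)f\,dc>0$. The bracket is below $1$, so the sign is not immediate. Instead we use the identity: $\xi(\tau_B^*)>0$ iff $\frac{H}{\gamma}\int_{\underline{c}}^{\tau_B^*}cf\,dc<\frac{L}{\gamma}\mathbb{E}[c]$, i.e.
\[
\frac{1}{F(\tau_B^*)}\int_{\underline{c}}^{\tau_B^*}cf\,dc<\mathbb{E}[c],
\]
after substituting $H=L/F(\tau_B^*)$. The left side is $\mathbb{E}[c\mid c\le\tau_B^*]$, which is strictly less than $\mathbb{E}[c]$ because $\tau_B^*<\bar{c}$ and $f>0$. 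This is the cleanest route, and the step where the substitution $H=L/F(\tau_B^*)$ must be spotted; I expect it to be the only nonroutine point.

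For the second part, take $(u_B,u_S)$ on the Pareto frontier. By the main-text argument (equivalently Lemma \ref{Lemma:Frontier-Sufficiency.}), some $\tau$ satisfies $SW(\tau)-u_B=u_S=F(\tau)H$, so $u_B=\xi(\tau)$. Uniqueness follows because $\xi$ is strictly decreasing, so $\tau=\xi^{-1}(u_B)$. For the location of $\tau$, IR gives two bounds.
\begin{itemize}
\item From $u_S\ge L$: $F(\tau)H\ge L=F(\tau_B^*)H$, so $\tau\ge\tau_B^*$ by strict monotonicity of $F$.
\item From $u_B\ge0$: $\xi(\tau)\ge0=\xi(\tau_S^*)$, so $\tau\le\tau_S^*$ by strict monotonicity of $\xi$.
\end{itemize}
Hence $\tau\in[\tau_B^*,\tau_S^*]$.
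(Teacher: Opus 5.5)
Your proof is correct and matches the paper's argument essentially step for step. It uses the endpoint values $\xi(\underline{c})>0>\xi(\bar{c})=-\mathbb{E}[c]$ together with strict monotonicity of $\xi$ to locate $\tau_S^*$. It substitutes $F(\tau_B^*)=L/H$ to reduce $\xi(\tau_B^*)>0$ to $\mathbb{E}[c\mid c\le\tau_B^*]<\mathbb{E}[c]$. It then uses IR plus monotonicity of $F$ and $\xi$ for the location and uniqueness of the frontier cutoff. The aside about comparing with $u_B^0$, and the unfinished rewriting of $SW$ before ``Alternatively,'' are loose ends you never use and can simply be deleted.
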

\begin{proof}
Since $0<L/H<1$ and $F$ has full support, $\tau_{B}^{*}=F^{-1}(L/H)$
exists, with $\underline{c}<\tau_{B}^{*}<\bar{c}$. Next, $\xi(\underline{c})=SW(\underline{c})>0$
and $\quad\xi(\bar{c})=SW(\bar{c})-H=-\mathbb{E}[c]<0$. Since $\xi$
is continuous and strictly decreasing, there exists a unique $\tau_{S}^{*}\in(\underline{c},\bar{c})$
such that $\xi(\tau_{S}^{*})=0$. It remains to show $\tau_{B}^{*}<\tau_{S}^{*}$.
Using $F(\tau_{B}^{*})=L/H$,
\begin{equation}
\xi(\tau_{B}^{*})=\frac{H}{\gamma}\left[\frac{L}{H}\mathbb{E}[c]-\int_{\underline{c}}^{\tau_{B}^{*}}cf(c)\,dc\right].\label{eq:xi(tau*_B)}
\end{equation}
The bracketed term is strictly positive because
\begin{equation}
\int_{\underline{c}}^{\tau_{B}^{*}}cf(c)\,dc=F(\tau_{B}^{*})\mathbb{E}[c\mid c\leq\tau_{B}^{*}]<F(\tau_{B}^{*})\mathbb{E}[c]=\frac{L}{H}\mathbb{E}[c].\label{eq:inequality}
\end{equation}
Therefore $\xi(\tau_{B}^{*})>0$, so $\tau_{B}^{*}<\tau_{S}^{*}$.
Finally, if $(u_{B},u_{S})$ is on the Pareto frontier, Lemma \ref{Lemma:Frontier-Sufficiency.}
implies that there exists $\tau$ such that $u_{S}=F(\tau)H$ and
$u_{B}=\xi(\tau)$. IR requires $u_{S}\geq L$ and $u_{B}\geq0$.
Hence $\tau\geq\tau_{B}^{*}$ and $\tau\leq\tau_{S}^{*}$. Uniqueness
follows from the strict monotonicity of $\xi$.
\end{proof}
\begin{lem}
\label{Lemma:u^0_B}$u_{B}^{0}$---defined in (\ref{eq:u^0_B})---is
strictly lower than $\xi\left(\tau_{B}^{*}\right)$.
\end{lem}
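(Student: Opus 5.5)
Since $u_B^0$ is an attainable buyer payoff at seller payoff $L$ (the ban-on-profiling equilibrium), one route is to invoke Theorem \ref{Theorem:Characterization}. That route only yields the weak inequality, however, so I will instead compare the two closed forms directly. We have
\[
u_{B}^{0}=\tau_{B}^{*}F(\tau_{B}^{*})-\int_{\underline{c}}^{\tau_{B}^{*}}cf(c)\,dc .
\]
By the identity (\ref{eq:xi-identity}), evaluated at $\tau_B^*$,
\[
\xi(\tau_{B}^{*})=\frac{L}{\gamma}\mathbb{E}[c]-\frac{H}{\gamma}\int_{\underline{c}}^{\tau_{B}^{*}}cf(c)\,dc .
\]
Using $F(\tau_B^*)=L/H$ and $H=\gamma+L$, the first expression becomes $u_B^0=\tau_B^*L/H-\int_{\underline{c}}^{\tau_B^*}cf\,dc$.

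Let $I:=\int_{\underline{c}}^{\tau_B^*}cf(c)\,dc$. Multiplying the difference by $\gamma$ gives
\[
\gamma\left[\xi(\tau_{B}^{*})-u_{B}^{0}\right]=L\,\mathbb{E}[c]-HI-\gamma\tau_{B}^{*}\tfrac{L}{H}+\gamma I=L\left(\mathbb{E}[c]-I\right)-\gamma\tau_{B}^{*}\tfrac{L}{H},
\]
since $-HI+\gamma I=-LI$. The claim is therefore equivalent to
\[
\int_{\tau_{B}^{*}}^{\bar{c}}cf(c)\,dc>\frac{\gamma}{H}\,\tau_{B}^{*}.
\]

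To prove this inequality, bound the left side below using $c>\tau_B^*$ on the domain of integration:
\[
\int_{\tau_B^*}^{\bar c}cf(c)\,dc>\tau_B^*\bigl(1-F(\tau_B^*)\bigr)=\tau_B^*(H-L)/H=\tau_B^*\gamma/H.
\]
The inequality is strict because $f>0$ and $\tau_B^*<\bar c$ by Lemma \ref{Lemma:tau*}. This establishes the result, and in fact shows that the slack equals exactly $(L/\gamma)\int_{\tau_B^*}^{\bar c}(c-\tau_B^*)f(c)\,dc$.

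The only delicate step is the algebra: substituting $H=\gamma+L$ correctly and checking that the $I$ terms collapse to $-LI$. The final inequality itself is immediate once $1-L/H=\gamma/H$ is recognized.
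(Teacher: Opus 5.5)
Your proof is correct and is essentially the paper's argument. Both substitute $F(\tau_B^*)=L/H$ and $H=\gamma+L$ into the closed form of $\xi(\tau_B^*)$ and reduce the gap to $\xi(\tau_B^*)-u_B^0=\frac{L}{\gamma}\int_{\tau_B^*}^{\bar{c}}(c-\tau_B^*)f(c)\,dc>0$, with strictness from $f>0$ and $\tau_B^*<\bar{c}$; your intermediate inequality $\int_{\tau_B^*}^{\bar{c}}cf(c)\,dc>\tau_B^*\gamma/H$ is just this identity before collecting terms.
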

\begin{proof}
Using $F\left(\tau_{B}^{*}\right)=L/H$ and (\ref{eq:xi(tau*_B)}),
\[
\xi(\tau_{B}^{*})-u_{B}^{0}=\frac{L}{\gamma}\int_{\tau_{B}^{*}}^{\bar{c}}(c-\tau_{B}^{*})f(c)\,dc>0.
\]
\end{proof}

\section{\protect\label{Section:Proofs}Appendix: Omitted Proofs}

\subsection*{Proof of Proposition \ref{Proposition:Pareto-frontier}}
\begin{proof}
From Lemma \ref{Lemma:tau*}, along the Pareto frontier, $u_{S}=F\left(\tau\right)H$
and $u_{B}=\xi\left(\tau\right)$, so 
\[
\bar{U}_{S}'\left(u_{B}\right)=\frac{du_{S}/d\tau}{du_{B}/d\tau}=\frac{f\left(\tau\right)H}{\xi'\left(\tau\right)}=-\frac{H}{\tau\left(1+L/\gamma\right)}=-\frac{\gamma}{\tau},
\]
where the last step uses $\gamma+L=H$. Since any frontier cutoff
satisfies $\tau\le\tau_{S}^{*}<\bar{c}<\gamma$, $-\gamma/\tau<-1$.
$\tau=\xi^{-1}\left(u_{B}\right)$, so $\tau$ falls with $u_{B}$;
thus, $\bar{U}_{S}$ is strictly concave. Since total welfare on the
frontier is $W\left(u_{B}\right):=u_{B}+\bar{U}_{S}\left(u_{B}\right)$,
we have $W'\left(u_{B}\right)=1+\bar{U}_{S}'\left(u_{B}\right)<0$.
\end{proof}

\subsection*{Proof of Proposition \ref{prop:mps}}
\begin{proof}
For $J\in\{F,G\}$, define the Lorenz curve $\mathcal{L}_{J}(q):=\frac{1}{\mathbb{E}[c]}\int_{\underline{c}}^{J^{-1}(q)}c\,dJ(c)$
for $q\in[0,1]$, which is strictly increasing. Since $F$ is a mean-preserving
spread of $G$, $\mathcal{L}_{F}(q)\le\mathcal{L}_{G}(q)$ for all
$q$ (\citet{Shaked_Shanthikumar_2007(book)}, Theorem 3.A.10).

Index $\xi$ by distribution $J$ and consider first the highest attainable
buyer payoff under $J$. Using $J(\tau_{B}^{J*})=L/H$ and (\ref{eq:xi-identity}),
\[
\xi_{J}(\tau_{B}^{J*})=\frac{H}{\gamma}\,\mathbb{E}[c]\left[\frac{L}{H}-\mathcal{L}_{J}\!\left(\frac{L}{H}\right)\right].
\]
Since $\mathcal{L}_{F}(L/H)\le\mathcal{L}_{G}(L/H)$, we obtain $\xi_{F}(\tau_{B}^{F*})\ge\xi_{G}(\tau_{B}^{G*})$.

Next, fix $u_{B}\in[0,\xi_{G}(\tau_{B}^{G*})]$ and let $\tau^{J}:=\xi_{J}^{-1}(u_{B})$
denote the frontier cutoff under $J$, so that $\bar{U}_{S}^{J}(u_{B})=J(\tau^{J})H$.
By (\ref{eq:xi-identity}), $\xi_{J}(\tau^{J})=u_{B}$ is equivalent
to 
\[
\mathcal{L}_{J}\big(J(\tau^{J})\big)=\frac{L\,\mathbb{E}[c]-\gamma u_{B}}{H\,\mathbb{E}[c]},
\]
whose right-hand side does not depend on $J$. Hence $\mathcal{L}_{F}\big(F(\tau^{F})\big)=\mathcal{L}_{G}\big(G(\tau^{G})\big)\ge\mathcal{L}_{F}\big(G(\tau^{G})\big)$.
Since $\mathcal{L}_{F}$ is strictly increasing, $F(\tau^{F})\ge G(\tau^{G})$,
and therefore $\bar{U}_{S}^{F}(u_{B})=F(\tau^{F})H\ge G(\tau^{G})H=\bar{U}_{S}^{G}(u_{B})$. 
\end{proof}

\subsection*{Proof of Lemma \ref{Lemma:MultiInvestment}}
\begin{proof}
The first condition in Assumption \ref{Assumption} implies that for
any $j<n$,
\[
v_{j}\le\frac{j}{n}(v_{n}-L)+L=\frac{j}{n}v_{n}+\left(\frac{n-j}{n}\right)L,
\]
so
\begin{align*}
v_{n}-v_{j}\ge & v_{n}-\frac{j}{n}v_{n}-\left(\frac{n-j}{n}\right)L=\left(v_{n}-L\right)\left(\frac{n-j}{n}\right)>\left(n-j\right)\bar{c},
\end{align*}
where the last inequality is the second condition in Assumption \ref{Assumption}.
This implies 
\[
\frac{v_{n}-v_{j}}{n-j}>\bar{c}\ge c\ \ \ \forall c\in\left[\underline{c},\bar{c}\right].
\]
\end{proof}
\newpage

\bibliographystyle{econ}
\bibliography{Ref}

\newpage
\setcounter{page}{1}
\renewcommand{\thepage}{OA-\arabic{page}} 
\begin{center} 

{\Large \bf 
Online Appendix: \\
Supplementary Analysis}

\bigskip

\end{center}

\section{\protect\label{OA:ParetoNSC}Additional Analysis for the Pareto Frontier
of Attainable Payoffs}

We provide necessary and sufficient conditions for attaining the payoffs
on the Pareto frontier. Without loss of generality, we assume that
$\mathbb{S}=\left\{ L,H\right\} $ and restrict attention to equilibria
in which the seller sets $p=s$. The following characterizes the set
of all equilibria that attain each payoff pair on the Pareto frontier
\begin{prop}
\label{Proposition:OA-ParetoNSC}Fix a Pareto-frontier payoff pair
$(u_{B},u_{S})$, and let $\tau$ be the unique cutoff satisfying
$SW(\tau)-u_{B}=u_{S}=F(\tau)H$.\footnote{Existence of such a $\tau$ follows from Lemma \ref{Lemma:tau*}.}
An equilibrium under signal structure $\pi$ with the seller offering
$p=s$ attains $(u_{B},u_{S})$ if and only if the following conditions
hold: 
\begin{itemize}
\item the buyer invests if and only if $c\leq\tau$;
\item $\pi_{L}\left(c\right)=c/\gamma$ for all $c>\tau$; 
\item $\pi_{L}\left(c\right)\ge c/\gamma$ for all $c\le\tau$;
\item $\int_{\underline{c}}^{\tau}\left(\pi_{L}\left(c\right)\gamma-c\right)f\left(c\right)dc=u_{B}$.
\end{itemize}
\end{prop}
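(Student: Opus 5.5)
The plan is to prove both directions by tracking where the two inequalities behind (\ref{eq:SC}) and (\ref{eq:HC}) must hold with equality. Fix an equilibrium under $\pi$ in which the seller plays $p=s$. Then $q_L(c)=\pi_L(c)$, and let $I(c)\in[0,1]$ be type $c$'s investment probability.

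\emph{Necessity.} Write realized surplus as
\[
\int I(c)(H-c)f\,dc+\int(1-I(c))\pi_L(c)Lf\,dc .
\]
Let $\hat\tau$ satisfy $F(\hat\tau)=\int I f$. The derivation of (\ref{eq:SC}) gives total surplus $\le SW(\hat\tau)$, and obedience to $H$ being weakly better than always charging $H$ gives $u_S\ge F(\hat\tau)H$. Since $(u_B,u_S)$ is on the frontier, $u_S=F(\tau)H$, so $F(\hat\tau)\le F(\tau)$ and $\hat\tau\le\tau$. Then
\[
u_B+u_S=SW(\tau)\le SW(\hat\tau)\le SW(\tau),
\]
using that $SW$ is strictly increasing. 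Hence $\hat\tau=\tau$, and both upper-bound steps bind. Three consequences follow.

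First, the rearrangement step (low-cost types investing maximizes surplus for a given mass) must be tight. Its gain is pointwise strict: $H-c>(c/\gamma)L$ for all $c$, and the functions are monotone in opposite directions. So up to null sets, $I=\mathbf{1}\{c\le\tau\}$.

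Second, the bound $q_L(c)\le c/\gamma$ for noninvestors must bind a.e. on $(\tau,\bar c]$, which gives $\pi_L(c)=c/\gamma$ there. This holds a.e.; I would note that continuity is not available, so the statement should be read modulo measure zero, or made exact by recalling that the signal only matters through integrals.

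Third, (\ref{eq:BIC}) for investors gives $\pi_L(c)\ge c/\gamma$ on $[\underline c,\tau]$. The buyer's payoff is $\int_{\underline c}^{\tau}(\pi_L\gamma-c)f\,dc$, because noninvestors earn $0$; this equals $u_B$ by hypothesis.

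\emph{Sufficiency.} Given the four conditions, buyer optimality follows from (\ref{eq:BIC}): investors have $q_L\ge c/\gamma$ and noninvestors are indifferent. The buyer payoff is $u_B$ by the fourth condition. For the seller's payoff, decompose as in the proof of Proposition \ref{Proposition:Construction}, $u_S'=Q_LL+Q_H^IH$. Total surplus equals $SW(\tau)$ exactly, since investors trade surely and noninvestors trade with probability $c/\gamma$. Hence $u_S'=SW(\tau)-u_B=u_S$.

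Seller obedience then repeats the two identities from the proof of Proposition \ref{Proposition:Construction}:
\[
Q_LL-Q_L^IH=u_S-F(\tau)H=0\ge0,
\]
\[
Q_H^IH-Q_HL=u_S-L\ge0,
\]
the latter by IR.

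The main obstacle is the necessity step that forces $I$ to be a cutoff and $\pi_L=c/\gamma$ above $\tau$. It needs a careful strict version of the rearrangement argument with mixed investment probabilities. I would write the surplus gap
\[
SW(\tau)-\text{surplus}
\]
as a sum of two nonnegative integrals: one equal to $\int (H-c-\tfrac{c}{\gamma}L)(\mathbf 1_{c\le\tau}-I)f\,dc$ adjusted by monotonicity, and one equal to $\int (1-I)(c/\gamma-\pi_L)Lf\,dc$. Each must vanish, which yields the first two conditions a.e.
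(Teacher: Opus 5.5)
Your proposal is correct and follows essentially the same route as the paper. Both arguments force $\hat\tau=\tau$ from (SC) and (HC) binding, read off the cutoff strategy and $\pi_L(c)=c/\gamma$ above $\tau$ from tightness of the surplus bound (only almost everywhere, which the paper's proof also concedes), and prove sufficiency with the same two seller-obedience identities; your explicit two-term decomposition of $SW(\tau)$ minus realized surplus simply makes rigorous the rearrangement step the paper asserts in one line.
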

\begin{proof}
Fix a Pareto-frontier pair $(u_{B},u_{S})$, and let $\tau$ be the
unique cutoff satisfying $SW(\tau)-u_{B}=u_{S}=F(\tau)H$.

We first prove necessity of each condition. Define $\hat{\tau}$ such
that $F\left(\hat{\tau}\right)$ is the probability that the buyer
invests. The necessity part of Proposition \ref{Proposition:AC} implies
$u_{B}+u_{S}\leq SW(\hat{\tau})$ and $u_{S}\geq F(\hat{\tau})H$.
Since $u_{B}+u_{S}=SW(\tau)$ and $u_{S}=F(\tau)H$, and since both
$SW$ and $F$ are strictly increasing, these inequalities imply $\hat{\tau}=\tau$.
Since $u_{B}+u_{S}=SW(\tau)$, the investing set must be the lowest-cost
set of measure $F(\tau)$. Therefore the buyer must invest if and
only if $c\leq\tau$.

Given the cutoff strategy and $p=s$, type $c$ faces $p=L$ with
probability $\pi_{L}\left(c\right)$. Buyer optimality in (\ref{eq:BIC})
requires $\pi_{L}\left(c\right)\geq c/\gamma$ for $c\leq\tau$ and
$\pi_{L}\left(c\right)\leq c/\gamma$ for $c>\tau$. The surplus equality
established above forces $\pi_{L}\left(c\right)=c/\gamma$ for almost
every $c>\tau$. The buyer's payoff is $\int_{\underline{c}}^{\tau}\left(\pi_{L}\left(c\right)\gamma-c\right)f\left(c\right)dc$,
which must equal $u_{B}$.

Next, we establish sufficiency. Suppose the buyer invests if and only
if $c\leq\tau$, the seller sets $p=s$, and $\pi$ satisfies the
stated conditions. The buyer incentive constraints hold because types
$c\leq\tau$ obtain weakly nonnegative payoff from investing, while
types $c>\tau$ are indifferent between investing and not investing. 

The buyer payoff is $u_{B}$ by construction. The total surplus is
$SW(\tau)$: investing types generate $H-c$, and noninvesting types
trade with probability $c/\gamma$. Hence the seller payoff is $SW(\tau)-u_{B}=F(\tau)H=u_{S}$.

It remains only to verify seller obedience. Let $Q_{s}$ be the probability
of signal $s$, and $Q_{s}^{I}$ the joint probability of signal $s$
and investment. The seller obeys signal $L$ if $Q_{L}L-Q_{L}^{I}H\geq0$.
Using $Q_{L}^{I}+Q_{H}^{I}=F(\tau)$, we have
\[
Q_{L}L-Q_{L}^{I}H=Q_{L}L+Q_{H}^{I}H-F(\tau)H=u_{S}-F(\tau)H=0.
\]
The seller obeys signal $H$ if $Q_{H}^{I}H-Q_{H}L\geq0$. Using $Q_{L}+Q_{H}=1$,
we have
\[
Q_{H}^{I}H-Q_{H}L=Q_{H}^{I}H+Q_{L}L-L=u_{S}-L\geq0.
\]
\end{proof}
Proposition \ref{Proposition:OA-ParetoNSC} formalizes the observation
about $\pi_{L}\left(c\right)=c/\gamma$ for noninvesting types, as
discussed after Proposition \ref{Proposition:Construction} in the
main text. Seller obedience requires enough accuracy in distinguishing
investing from noninvesting types. Thus profiling error cannot be
wasted on noninvesting types: they receive zero payoff anyway, and
any additional error assigned to them only weakens the information
that sustains seller obedience.

The remaining freedom is entirely on the investing side. For each
investing type, equilibrium incentive compatibility only requires
$\pi_{L}(c)\geq c/\gamma$, and the aggregate amount of such error
must deliver buyer payoff $u_{B}$. The dual-cutoff construction used
in the main text is therefore one convenient way to allocate this
error: it gives all surplus to the lowest-cost investing types up
to $\phi$ and leaves the other investing types indifferent. Proposition
\ref{Proposition:OA-ParetoNSC} shows that this allocation is not
essential. Any distribution of profiling error across investing types
that satisfies the same aggregate buyer-payoff condition attains the
same Pareto-frontier payoff.

This is also the sense in which buyer protection comes from targeted
ignorance rather than from seller ignorance per se. A profiling ban
spreads error across all types and therefore spends some of it on
noninvestors, where it has no buyer-welfare value. A frontier equilibrium
instead pairs maximal accuracy about noninvesting types with enough
mis-profiling of investing types to let the buyer retain the desired
surplus. The proposition therefore clarifies the statement in the
main text: information provision to the seller can coexist with buyer
protection, but only when the residual uncertainty is concentrated
on the types whose investment surplus is at stake.

\section{\protect\label{OA:Multiple-Investments}Additional Analysis for Section
\ref{Section:Multiple-Investments}}

\subsection*{When the efficient option is not the top option}

Section \ref{Section:Multiple-Investments} illustrates how our results
hold when the buyer has multiple investment options. The underlying
assumption is that the efficient investment option is the highest
option $n$. We show that this assumption is also not necessary.

Without loss of generality, let 
\[
k:=\underset{i\in\left\{ 1,\dots,n\right\} }{\arg\max}\frac{v_{i}-L}{i},
\]
so the investment with the highest average return is option $k$,
which is possibly lower than $n$.
\begin{assumption}
\label{Assumption:2}$v_{k}-k\bar{c}>L$ and $v_{k}-kc>v_{i}-ic$
for all $i\ne k$ and $c\in\left[\underline{c},\bar{c}\right]$.
\end{assumption}
Assumption \ref{Assumption:2} says that option $k$ is the efficient
option for every type.
\begin{prop}
Under Assumption \ref{Assumption:2}, an IR pair $(u_{B},u_{S})$
is attainable if and only if there exists $\tau\in[\underline{c},\bar{c}]$
such that $SW_{k}(\tau)-u_{B}\ge u_{S}\ge F(\tau)v_{k}$.
\end{prop}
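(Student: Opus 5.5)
The plan is to mirror the proof of Proposition \ref{Proposition:Multiple-Investment}, with option $k$ playing the role of option $n$. Here $SW_{k}(\tau):=\int_{\underline{c}}^{\tau}(v_{k}-kc)f(c)\,dc+\int_{\tau}^{\bar{c}}\frac{kc}{v_{k}-L}Lf(c)\,dc$. The only new ingredient is handling options $i>k$, which now exist both in the necessity argument and as possible deviations in the construction.

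\emph{Necessity.} Fix an equilibrium attaining $(u_{B},u_{S})$. Define $\tau$ so that $F(\tau)$ equals the probability that the buyer chooses some option $i\ge k$.
\begin{itemize}
\item (HC): since $v_{i}$ is increasing, every such buyer has valuation at least $v_{k}$. The seller's deviation to always pricing $v_{k}$ therefore sells with probability at least $F(\tau)$, which gives $u_{S}\ge F(\tau)v_{k}$.
\item (SC), options $i>k$: a type choosing $i>k$ generates at most $v_{i}-ic<v_{k}-kc$, by the efficiency part of Assumption \ref{Assumption:2}.
\item (SC), options $j<k$: deviating from $j$ to $k$ must be unprofitable, which gives $q_{j}(c)\le c(k-j)/(v_{k}-v_{j})$. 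By definition of $k$ as the arg max of average return, $v_{j}\le L+\frac{j}{k}(v_{k}-L)$. Exactly the computation in (\ref{eq:surplus-inequality}), with $n$ replaced by $k$, bounds the surplus by $\frac{kc}{v_{k}-L}L$. The bracketed term is increasing in $v_{j}$ because $v_{k}>v_{j}$.
\end{itemize}
The per-type bound for the investing group, $v_{k}-kc$, is decreasing in $c$, and the bound for the other group is increasing in $c$. So for a given investment mass, total surplus is maximized when the lowest-cost types are the ones in the investing group. Hence $u_{B}+u_{S}\le SW_{k}(\tau)$.

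\emph{Sufficiency.} Use the dual-cutoff construction of Proposition \ref{Proposition:Construction} with signals $\{L,v_{k}\}$. Set $\pi_{L}(c)=1$ for $c\le\phi$ and $\pi_{L}(c)=kc/(v_{k}-L)$ for $c>\phi$; this is at most $1$ because $v_{k}-k\bar{c}>L$. Types $c\le\tau$ choose $k$, types $c>\tau$ do not invest, and the seller follows the signal. The cutoffs $\phi<\tau$ come from the same existence argument as in Proposition \ref{Proposition:Construction}, using $v_{k}-kc=(v_{k}-L-kc)+L$.

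Seller obedience is verbatim as before. In equilibrium only valuations $L$ and $v_{k}$ arise, so restricting prices to $\{L,v_{k}\}$ is without loss.

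For buyer optimality, deviations to $j<k$ are ruled out as in Proposition \ref{Proposition:Multiple-Investment}:
\begin{itemize}
\item for $c\le\phi$, by efficiency of option $k$;
\item for $c>\phi$, by $\frac{k(v_{j}-L)}{v_{k}-L}-j\le0$.
\end{itemize}

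The step I expect to be the main obstacle is ruling out deviations to $j>k$, since such a buyer also trades at price $v_{k}$ and earns $v_{j}-v_{k}$ there. His payoff is
\[
\pi_{L}(v_{j}-L)+(1-\pi_{L})(v_{j}-v_{k})-jc=(v_{j}-v_{k})+\pi_{L}(v_{k}-L)-jc.
\]
Subtracting the payoff from option $k$, namely $\pi_{L}(v_{k}-L)-kc$, leaves
\[
(v_{j}-jc)-(v_{k}-kc)<0
\]
by Assumption \ref{Assumption:2}. Hence option $j$ is worse than $k$. Since $k$ is weakly optimal for every type (strictly for $c\le\phi$, with indifference against not investing for $c>\phi$), the prescribed strategy is a best response and the payoffs are $(u_{B},u_{S})$ as before.
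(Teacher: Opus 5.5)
Your proposal is correct and follows the paper's proof essentially step for step. Necessity uses the seller's deviation to $p=v_k$ together with the per-type surplus bounds: $v_k-kc$ for options $j\ge k$ (by Assumption~\ref{Assumption:2}), and $\frac{kc}{v_k-L}L$ for $j<k$ (via the deviation-to-$k$ constraint). Sufficiency uses the dual-cutoff construction with $n$ replaced by $k$, and deviations to $j>k$ are ruled out by the same payoff-difference computation $(v_j-jc)-(v_k-kc)<0$. Your extra checks, that $\pi_L(c)\le 1$ and that restricting prices to $\{L,v_k\}$ is without loss, fill in details the paper leaves implicit.
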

\begin{proof}
For necessity, fix an equilibrium attaining $(u_{B},u_{S})$. Let
$x(c)$ denote the probability that type $c$ chooses an option $j\ge k$,
and choose $\tau\in[\underline{c},\bar{c}]$ such that $F(\tau)=\int_{\underline{c}}^{\bar{c}}x(c)f(c)\,dc$.
By deviating to $p=v_{k}$, the seller sells whenever the buyer chooses
$j\ge k$. Seller optimality therefore implies
\[
u_{S}\ge v_{k}\int_{\underline{c}}^{\bar{c}}x(c)f(c)\,dc=F(\tau)v_{k}.
\]
We next bound total surplus. If type $c$ chooses $j\ge k$, the resulting
surplus is at most $v_{j}-jc\le v_{k}-kc$, where the inequality follows
from Assumption \ref{Assumption:2}. Now suppose that type $c$ chooses
$j<k$, and let $q_{j}(c)$ be the probability that trade occurs.
Deviating from $j$ to $k$ would increase the buyer's payoff by at
least $q_{j}(c)(v_{k}-v_{j})-c(k-j)$. Buyer optimality thus requires
\[
q_{j}(c)\le c\frac{k-j}{v_{k}-v_{j}}.
\]
Because $k$ has the highest average return, $v_{j}\le L+\frac{j}{k}(v_{k}-L)$.
Consequently, as in the proof of Proposition \ref{Proposition:Multiple-Investment},
\[
q_{j}(c)v_{j}-jc\le c\left[\frac{k-j}{v_{k}-v_{j}}v_{j}-j\right]\le\frac{kc}{v_{k}-L}L.
\]
This argument includes $j=0$. Hence, writing $s(c)$ for the expected
surplus generated by type $c$,
\[
s(c)\le x(c)(v_{k}-kc)+[1-x(c)]\frac{kc}{v_{k}-L}L.
\]
The difference between the two bounds,
\[
(v_{k}-kc)-\frac{kc}{v_{k}-L}L=v_{k}\left(1-\frac{kc}{v_{k}-L}\right),
\]
is decreasing in $c$. Given that $\int x(c)f(c)\,dc=F(\tau)$, total
surplus is therefore maximized when $x(c)=1$ for $c\le\tau$ and
$x(c)=0$ for $c>\tau$. It follows that
\[
u_{B}+u_{S}\le\int_{\underline{c}}^{\tau}(v_{k}-kc)f(c)\,dc+\int_{\tau}^{\bar{c}}\frac{kc}{v_{k}-L}Lf(c)\,dc=SW_{k}(\tau).
\]
Thus $SW_{k}(\tau)-u_{B}\ge u_{S}$.

For sufficiency, use the dual-cutoff construction from Proposition
\ref{Proposition:Multiple-Investment}, replacing $n$ by $k$. The
proof of Proposition \ref{Proposition:Multiple-Investment} verifies
seller obedience and rules out deviations to every $j<k$. It remains
to consider $j>k$. Under the construction, prices lie in $\{L,v_{k}\}$,
so both options $j$ and $k$ trade following either signal. The payoff
difference from choosing $j$ rather than $k$ is therefore $(v_{j}-jc)-(v_{k}-kc)<0$
by Assumption \ref{Assumption:2}. No deviation to $j>k$ is profitable.
The construction is consequently an equilibrium and attains $(u_{B},u_{S})$.

\end{proof}

\end{document}